\documentclass[12pt,draftcls,peerreview,onecolumn]{IEEEtran}

%
\ifCLASSINFOpdf
  \usepackage[pdftex]{graphicx}
\else
\fi

\usepackage{graphicx}
\usepackage{amsmath}
\usepackage{footnote}
\usepackage{amsthm}
\usepackage{amssymb}
\newtheorem{definition}{Definition}
\newtheorem{theorem}{Theorem}
\newtheorem{proposition}{Proposition}
\newtheorem{algorithm}{Algorithm}
\newtheorem{lemma}{Lemma}
\usepackage{float}
\usepackage{latexsym}
\usepackage{setspace}
\usepackage{subfigure} 
\usepackage[colorlinks,citecolor=blue, linkcolor=blue]{hyperref}
\usepackage{tabularx}
\usepackage{varwidth}
\usepackage{multirow}
\usepackage{color}
\usepackage{calligra}
\usepackage{algorithm}
\usepackage{algorithmicx}
\usepackage{algpseudocode}

\hyphenation{op-tical net-works semi-conduc-tor}

\begin{document}
\title{Fast Graph Subset Selection Based on G-optimal Design}

\author{Zhengpin Li$^1$, Zheng Wei$^1$, Jian Wang$^1$, Yun Lin$^2$ and Byonghyo Shim$^{1,3}$ \\
$^1$School of Data Science, Fudan University, China 
\\
$^2$School of Information and Communication Engineering, Harbin Engineering University, China \\
$^3$Department of Electrical and Computer Engineering, Seoul National University,  Korea\\
E-mail: jian\_wang@fudan.edu.cn 
%
}
\date{December 25, 2021}

\IEEEtitleabstractindextext{%
\begin{abstract}
Graph sampling theory extends the traditional sampling theory to graphs with topological structures. As a key part of the graph sampling theory, subset selection chooses nodes on graphs as samples to reconstruct the original signal. Due to the eigen-decomposition operation for Laplacian matrices of graphs,  however, existing subset selection  methods usually require high-complexity calculations. In this paper, with an aim of enhancing the computational efficiency of subset selection on graphs, we propose a novel objective function based on the optimal experimental design. Theoretical analysis shows that this function enjoys an $\alpha$-supermodular property with a provable  lower bound on $\alpha$. The objective function, together with an approximate of the low-pass filter on graphs, suggests a fast subset selection method that does not require any eigen-decomposition operation. Experimental results show that the proposed method exhibits high computational efficiency, while having competitive results compared to the state-of-the-art ones, especially when the sampling rate is low.   
\end{abstract}
\begin{IEEEkeywords}
Graph signal processing, sampling theory, optimal experimental design. 
\end{IEEEkeywords}}
\maketitle

\IEEEdisplaynontitleabstractindextext

%
\IEEEpeerreviewmaketitle

\section{Introduction}
{\bf Notice: This work has been submitted to the IEEE for possible publication. Copyright may be transferred without notice, after which this version may no longer be accessible.}

\IEEEPARstart{I}{n} various scenarios such as sensor networks, social networks, transportation systems and biomedicine, the signals of interest are defined on graphs~\cite{Overviewchallengesandapplications,Theemergingfieldofsignalprocessingongraphs}. Unlike images and time-varying signals, graph signals have a complex and irregular topology~\cite{Bigdataanalysis,DSPsampling}. By exploting potential connection information, graph signal processing (GSP) extends the mature tools for analyzing classical signals to graph signals~\cite{zhu2012approximating}. There are two main topics in GSP, which can be understood as encoding and decoding procedures. 
When faced with large-scale and complex structured graph data in the real world~\cite{Theemergingfieldofsignalprocessingongraphs,DSPsampling}, processing over raw data often consumes lots of storage and calculation resources, which motivates economic encoding paradigm that can effectively capture most of the information. Roughly speaking, sampling of graph signals can be studied under different perspectives: subset selection~\cite{chen2015sampling}, aggregation sampling~\cite{successivelocalaggregations,randomizedlocalaggregations} and local measurement~\cite{wang2016local}. 

In a nutshell, subset selection chooses nodes on the graph as samples to reconstruct graph signals that meet certain conditional assumptions. A typical example is active semi-supervised learning~\cite{Activesemi-supervised}, where the learner selects a small number of points as the training set, gets their labels and predicts the unknown labels of other points. Since the vector of classification tags is naturally a graph signal, selecting the training set can be translated to subset selection on graphs. In contrast to traditional sampling scenarios where Nyquist sampling theorem applies, subset selection on graphs has to consider the topological structure of nodes. Existing methods for subset selection can be roughly divided into two categories: i) random sampling and ii) deterministic sampling.  For random sampling, it has been shown that uniformly selecting a sufficient number of points in Erd{\"o}s-R\'{e}nyi graph can ensure perfect recovery~\cite{DSPsampling}. Besides, nonuniform selection of points according to experimentally designed sampling probabilities has been studied based on the restricted isometry property in compressed sensing~\cite{randomizedlocalaggregations,Randomsampling,10.1093/imaiai/iax021}. For both selection manors in the random category, experimental and theoretical comparisons with respect to the reconstruction error have been reported in~\cite{7148908}. 

Deterministic sampling chooses a fixed point subset according to a pre-determined objective function. Compared to random sampling, which performs well in the average sense, deterministic sampling usually requires fewer sampling points in order to achieve the same reconstruction effect. Among methods in the deterministic family, vertex-based methods have been widely studied in the field of machine learning as a sensor placement problem~\cite{krause2008near}. Recently, with the development of graph spectral theory, spectrum-based methods have also been proposed. They assume i) that the eigendecomposition of graph Laplacian is explicitly known and ii) that graph signals meet certain conditions (e.g., bandlimited or smooth). In the noiseless case, it has been proved that when the sampling budget is larger than the bandwidth, accurate recovery can always be ensured~\cite{DSPsampling}. In the presence of noise, there are deterministic methods connecting subset selection with optimal experiment design (OED), which introduces some scalarized statistical criteria as objective functions that are related to the variance matrix of the estimator. Typical criteria include determinant, trace and norm, which have inspired popular sampling methods such as D-optimal~\cite{Uncertaintyprincipleandsampling}, A-optimal~\cite{Uncertaintyprincipleandsampling} and E-optimal~\cite{DSPsampling}. Although those problems are combinatorial and NP-hard in their original forms, practical solutions can be found through greedy search~\cite{Samplinglarge} or convex relaxations~\cite{8309040}.

Since the spectrum-based methods have to calculate the eigen-decomposition of graph Laplacian in advance, it generally requries high-complexity calculations (i.e., $\mathcal{O}(N^3)$ where $N$ is the number of nodes in a graph)~\cite{tanaka2020sampling}. To alleviate the computational burden, methods without eigen-decomposition have been developed. The underlying strategies include using graph spectral proxies to maximize the approximate bound on the cut-off frequency for a given sampling set~\cite{Efficientsamplingsetselection}, assuming a biased graph Laplacian regularization based scheme~\cite{9040409}, or employing the graph localization operator~\cite{Eigendecomposition-free, Adistance-basedformulation}. While these method greatly improve the calculation efficiency, they often lack performance guarantees.

In this paper, with an aim of providing theoretical guarantees and meanwhile enhancing the computational efficiency,  we propose a subset selection method called augmented G-optimal design (AGOD) for graph subset selection. Our method is built upon a new statistical criterion called G-optimal experimental design that serves as a greedy principle for choosing informative sampling points in the graph. We give a theoretical analysis to characterize the convergence behavior of the greedy search process. The AGOD method, together with an approximate of a low-pass filter with given cutoff frequency, leads to a cost-effective version called fast AGOD (FAGOD) without any eigen-decomposition of Laplacian matrix. The main contributions of our paper are summarized as follows.

%

\begin{enumerate}
	
	\item Using the supermodular theory, we analyze the theoretical performance of AGOD. Our results show that the objective function, which minimizes the worse-case of reconstruction error, has an $\alpha$-supermodular property with $$\alpha \geq \frac{(2+\mu)\mu}{(1+\mu)^2}.$$ This ensures that the subset of points chosen by AGOD is near-optimal. 
	
	\item By reformulating the objective function, we apply fast graph Fourier transform to avoid the eigen-decomposition of Laplacian matrix, which motivates the FAGOD algorithm for selecting sampling points in a greedy fashion. Experimental results show that FAGOD has competitive performance compared to the state-of-the-art methods, especially when the sampling rate is low (e.g., when the sampling budget equals the bandwidth of target signal). 
\end{enumerate}

We stress that similar to the proposed FAGOD algorithm,~\cite{8827302} also suggested a counterpart called graph filter submatrix-based sampling (GFS) that uses fast graph Fourier transform to replace the eigen-decomposition of Laplacian matrix. However, FAGOD has theoretical advantage on the convergence property over GFS~\cite{8827302}. This is mainly because FAGOD is built upon a completely new objective criterion (i.e., G-optimal design) that enjoys $\alpha$-supermodular property with a tighter lower bound of $\alpha$, which, in essence, indicates a faster convergence in  greedy search.


The rest of this paper is organized as follows. In Section~\ref{sec3}, we overview GSP concepts and other basic knowledge used throughout this paper. We propose our new sampling criteria based on G-optimal design and talk about the theoretical properties in Section~\ref{sec44} and then a fast optimization algorithm is detailed in Section~\ref{sec5}. Finally, we present experimental results and conclusion in Sections~\ref{sec6} and~\ref{sec7}, respectively.

\section{Preliminaries}\label{sec3}
In this section, we will first introduce some basic knowledge and definitions of GSP, and then formally describe the problem of sampling on the graph. Finally, we will briefly review the existing sampling methods based on the ODE. Notations used in this paper are summarized in Table~\ref{table1}.

\subsection{Graph Signal Processing}
We briefly introduce signal processing on graphs. Consider a graph \begin{equation}
\mathcal{G}=(\mathcal{V}, \mathbf{A}),
\end{equation} 
where \(\mathcal{V}=\left\{v_{0}, \ldots, v_{N-1}\right\}\) is the set of nodes, and \(\mathbf{A} \in \mathbb{R}^{N \times N}\) is the weighted adjacency matrix, where the edge weight \(\mathbf{A}_{i, j}\) between nodes \(v_{i}\) and \(v_{j}\) represents the underlying relation between them. While the graph is undirected, the adjacency is always symmetry. Given $\mathbf{A}$, the graph Laplacian matrix $\mathbf{L}$ is computed as
\begin{equation}
	\mathbf{L}=\mathbf{D}-\mathbf{A},
\end{equation}
where $\mathbf{D}\in \mathbb{R}^{N \times N}$ denotes the diagonal degree matrix and $\mathbf{D}_{ii} = \sum_j\mathbf{A}_{ij}$. Without loss of generality, suppose that the node index is fixed, and thus the graph signal can be rewritten as a vector form:
\begin{equation}
	\mathbf{x}=\left[x_{0} ,\ldots , x_{N-1}\right]^\top \in \mathbb{R}^{N}.
\end{equation}
The eigen-decomposition of \(\mathbf{L}\) is 
\begin{equation}
	\mathbf{L}=\mathbf{V} \mathbf{\Lambda} \mathbf{V}^{\top},
\end{equation}
where the eigenvectors of $\mathbf{L}$ form the columns of matrix \(\mathbf{V}\in\mathbb{R}^{N \times N}\) whose each column is normalized to have unit $\ell_2$-norm, and the eigenvalue matrix \(\mathbf{\Lambda} \in \mathbb{R}^{N \times N}\) is the block diagonal matrix of the corresponding eigenvalues \(\lambda_{0}, \ldots, \lambda_{N-1}\) of $\mathbf{L}$ in ascending order. These eigenvalues represent frequencies on the graph~\cite{DiscretesignalprocessingongraphsFrequency,Vertex-frequency}.

In order to formulate a sampling theory for graph signals, we need a notion of frequency that helps to characterize the level of bandlimitedness of the graph signal with respect to the graph. The graph Fourier transform (GFT) of a graph signal \(\mathbf{x} \in \mathbb{R}^{N}\) is defined as~\cite{DSPsampling}:
\begin{equation}
	\mathbf{\widehat{{x}}}=\mathbf{V}^{\top} \mathbf{x} \in \mathbb{R}^{N},
\end{equation}
while the inverse graph Fourier transform is
\begin{equation}
\mathbf{x}=\mathbf{V} \mathbf{\widehat{{x}}}=\sum_{k=0}^{N-1} \widehat{x}_{k} \mathbf{v}_{:k}	 ,
\end{equation}
where $\mathbf{v}_{:k} \in \mathbb{R}^{N}$ is the $k$-th column of $\mathbf{V}$ and $ \widehat{x}_{k}$ is the $k$-th component in $ \mathbf{\widehat{x}}$. The vector $\mathbf{\widehat{{x}}}$ represents the signal’s expansion in the eigenvector basis and describes the frequency components of the graph signal $\mathbf{x}$. We mention that there are also other ways to define GFT, such as using the Jordan decomposition of adjacent matrix $\mathbf{A}$~\cite{chen2016signal}. Nevertheless, as will be seen later, different definition of GFT will not influence the design of our subset selection algorithm.

In the context of GSP,  an original graph signal is usually considered to be generated from a known and fixed class of graph signals with specific properties that are related to the irregular graph structure $\mathcal{G}$. In this paper, we focus on the bandlimited graph signals, as defined in~\cite{10.1093/imaiai/iax021,7472864}. Denote the first $K$ components of $\mathbf{\widehat{x}}$ as $\mathbf{\widehat{x}}_K \in \mathbb{R}^{K}$ and the first $K$ columns of $\mathbf{V}$ as $\mathbf{V}_K \in\mathbb{R}^{N \times K}$. The $K$-bandlimited graph signal is defined as follows. 
\begin{definition}[Bandlimited Graph Signal]
	A graph signal \(\mathbf{x} \in \mathbb{R}^{N}\) is called $K$-bandlimited on a graph
	\(\mathcal{G}=(\mathcal{V}, \mathbf{A})\) with parameter \(K \in\{0,1, \cdots, N-1\},\) when the 	graph frequency components satisfy
	\begin{equation}
	\widehat{x}_{k}=0 \quad \forall k \geq K.
	\end{equation}
\end{definition}

It has been proved in~\cite{chen2016signal} that the bandlimited graph signal has the smoothness property, which builds a bridge between the node domain and spectral domain. 

\setlength{\arrayrulewidth}{1.0pt}
\begin{table}[t!]
	\renewcommand\arraystretch{1.2}
		\begin{center} 	
	\caption{Notations in this paper} 
	\label{table1}
	\vspace{-4mm}
		\begin{tabular}{ccc} 
		\hline  {\bf Description}&\bf{Symbol}  &{\bf Dimension} 		  \\
		\hline
		Set of graph nodes& $\mathcal{V}$ & $N$ \\
		
		Sampling set& $\mathcal{S}$ & /\\
		
		Adjancent matrix& $\mathbf{A}$ &$N\times N$ \\

		Laplacian matrix	& $\mathbf{L}$ &$N\times N$\\

		Eigenvector matrix& $\mathbf{V}$ &$N\times N$\\

		First $K$ columns of $\mathbf{V}$& $\mathbf{V}_K$&$N\times K$\\
				
		Sampling operator according to $\mathcal{S}$& $\boldsymbol{\Psi}$ & $|\mathcal{S}|\times N$ \\ 

		Submatrix $\boldsymbol{\Psi}\mathbf{V}_K$  &$\mathbf{V}_{\mathcal{S}K}$ & $|\mathcal{S}|\times k$ \\

		The $i$-th column of $\mathbf{V}$  &  $\mathbf{v}_{:i}$ & $N\times 1$ \\

		The $i$-th row of $\mathbf{V}$ &  $\mathbf{v}_{i:}$& $1\times N$ \\

		The $(i,j)$-th element of $\mathbf{V}$ & $\mathbf{V}_{ij}$ & /\\

		Graph signal& $\mathbf{x}$& $N\times 1$\\
		
		GFT of graph signal& $\widehat{\mathbf{x}}$& $N\times 1$\\

		Observation signal&	$\mathbf{y}_\mathcal{S}$& $M\times 1$ \\
		
		Recovered GFT &$\mathbf{\widehat{x}}_{K}^{\prime}$ & $K\times 1$ \\

		The largest diagonal element  & $f$ & /\\
		
		Determinant & $|\cdot|$ & /\\
		
		Log base $10$& $\log$ &/ \\ 
		\hline
		\end{tabular}
	\end{center}
\end{table}
\setlength{\arrayrulewidth}{1.0pt}

\subsection{Problem Formulation}
We now describe the sampling theory in GSP. Suppose that we sample $M$ ($M\leq N$) points from the bandlimited graph signal \(\mathbf{x} \in \mathbb{R}^{N}\). Define sampling operator \(\boldsymbol{\Psi}: \mathbb{R}^{N} \mapsto \mathbb{R}^{M}\) as
\begin{equation}
\boldsymbol{\Psi}_{i, j}=\left\{\begin{array}{ll}1, & j=\mathcal{S}_{i} \\ 0, & \text{otherwise }\end{array}\right.,
\end{equation}
where \(\mathcal{S}=\left(\mathcal{S}_{1}, \cdots, \mathcal{S}_{M}\right)\) denotes the sequence of sampled indices, and \(\mathcal{S}_{i} \in\{1,2, \cdots, N\}\) is different from each other. In other words, we consider the non-repetitive sampling, which means that for $i\neq j$, $\mathcal{S}_i \neq \mathcal{S}_j$. Assume that there exists noise when we get the samples. Let \(\mathbf{n} \in \mathbb{R}^{M}\) be the \(i . i . d .\) noise with zero mean and unit variance introduced during sampling, which, in practice, is not necessarily restricted to a particular noise structure. Then, the samples are given by \(\mathbf{x}_{\mathcal{S}}=\boldsymbol{\Psi} \mathbf{x} \in \mathbb{R}^M,\) and the observation model is
	\begin{equation}\label{obser}
		\mathbf{y}_{\mathcal{S}}=\boldsymbol{\Psi} \mathbf{x}+\mathbf{n}=[y_{\mathcal{S}_1},\cdots, y_{\mathcal{S}_M}]^\top.
	\end{equation}
	
There are two major topics in GSP, which can be understood as encoding and decoding  procedures. In the encoding procedure, we need to determine the set of sampling points $\mathcal{S}$ that are most informative. As for the decoding procedure, given the sample points and the corresponding observed sampled signal $\mathbf{y}_\mathcal{S}$, we would like to recover the original signal. In this paper, the second procedure is performed based on the best linear unbiased estimation (BLUE).

\subsection{Subset Selection Methods Based on OED}\label{2.3}
We introduce some existing sampling set selection methods based on OED. For bandlimited graph signals, the observation model~\eqref{obser} can be expressed as 
\begin{equation}
\mathbf{y}_{\mathcal{S}}=\boldsymbol{\Psi} \mathbf{V}_{K} \widehat{\mathbf{x}}_{K}+\mathbf{n}.
\end{equation}
Let \(\mathbf{V}_{\mathcal{S} K}=\boldsymbol{\Psi} \mathbf{V}_{K} \in \mathbb{R}^{M\times K}\). Then, we have
\begin{equation}\label{8}
\mathbf{y}_{\mathcal{S}}=\mathbf{V}_{\mathcal{S} K} \mathbf{\widehat{x}}_K+\mathbf{n}.
\end{equation}
The BLUE of  \(\mathbf{\widehat{x}}_K\) from the observed samples \(\mathbf{y}_{\mathcal{S}}\) is
\begin{equation}\label{blue}
\mathbf{\widehat{x}}_{K}^{\prime}=\mathbf{V}_{\mathcal{S} K}^{\dagger} \mathbf{y}_{\mathcal{S}},
\end{equation}
where \(\mathbf{V}_{\mathcal{S} K}^{\dagger}\) is the pseudo-inverse
of \(\mathbf{V}_{\mathcal{S} K}\). The estimation error of \(\mathbf{x}\) is
	\begin{align}
\mathbf{e}&=\mathbf{V}_K\left(\mathbf{\widehat{x}}_{K}^\prime-\mathbf{\widehat{x}}_{K}\right)\nonumber\\
		&=\mathbf{V}_K\left(\mathbf{V}_{\mathcal{S} K}^{\dagger} y_{\mathcal{S}} - \mathbf{V}_{\mathcal{S} K}^{\dagger} y_{\mathcal{S}} + \mathbf{V}_{\mathcal{S} K}^{\dagger} \mathbf{n}\right)\nonumber\\
		&=\mathbf{V}_{K} \mathbf{V}_{\mathcal{S} K}^{\dagger} \mathbf{n}.
	\end{align}
Thus, the covariance matrix of estimation error is given by
\begin{equation}\label{error}
\mathbf{E}=\mathbb{E}\left[\mathbf{e} \mathbf{e}^{\top}\right]={\mathbf{V}}_{K}\left({\mathbf{V}}_{\mathcal{S} K}^{\top} {\mathbf{V}}_{\mathcal{S} K}\right)^{-1} {\mathbf{V}}_{K}^{\top}.
\end{equation}

In OED, various criteria have been defined to measure the reconstruction error. There have been some works that brought the idea from OED to define an objective function in GSP. For example, if we seek to minimize the mean squared error of covariance, noticing that $\mathbf{V}_K$ is orthogonal, we have
\begin{equation}\label{Tr}
	\begin{split}
		\mathcal{S}&=\underset{|\mathcal{S}|=M}{\arg \min }~\operatorname{Tr(\mathbf{E})} =\underset{|\mathcal{S}|=M}{\arg \min }~\operatorname{Tr}\left({\mathbf{V}}_{\mathcal{S} K}^{\top} {\mathbf{V}}_{\mathcal{S} K}\right)^{-1},\\
	\end{split}
\end{equation}
where $\operatorname{Tr}$ is the trace of a matrix. This is the so-called {A-optimal} design in~\cite{Uncertaintyprincipleandsampling}. Similarly, if we aim to design the optimal sampling set of size $M$ to minimize the worst-case reconstruction error, i.e., the maximum eigenvalue of covariance matrix $\mathbf{E}$, we have
\begin{equation}
\mathcal{S}=\underset{|\mathcal{S}|=M}{\arg \min }~\left\|\mathbf{E}\right\|_2,
=\underset{|\mathcal{S}|=M}{\arg \min }~\left\|\left({\mathbf{V}}_{\mathcal{S} K}^{\top} {\mathbf{V}}_{\mathcal{S} K}\right)^{-1}\right\|_2,
\end{equation}
which is equivalent to {E-optimal} design~\cite{DSPsampling}. To date, however, little therotical performance gurantee for this objective function has been known. In the literature of experimental design, a further design criterion is often considered as a surrogate~\cite{Uncertaintyprincipleandsampling}:
\begin{align}\label{D}
		\mathcal{S}=\underset{|\mathcal{S}|=M}{\arg \min }~\ln \left|\mathbf{E}\right|&=\underset{|\mathcal{S}|=M}{\arg \min }~\ln \left|\left({\mathbf{V}}_{\mathcal{S} K}^{\top} {\mathbf{V}}_{\mathcal{S} K}\right)^{-1}\right|\nonumber\\
		& = \underset{|\mathcal{S}|=M}{\arg \max}~\ln \left|{\mathbf{V}}_{\mathcal{S} K}^{\top} {\mathbf{V}}_{\mathcal{S} K}\right|.
\end{align}
where $|\cdot|$ denotes the determinant of one matrix. The {D-optimal} function is a monotonically increasing submodular function~\cite{shamaiah2010greedy}; thus, it can be used with provable performance guarantees. As shown in~\cite{boyd2004convex}, the {D-optimal} function also gives a quantitative measure of how informative the collection of measurements is, and meanwhile minimizes the volume of the resulting confidence ellipsoid for a fixed confidence level.
 
\section{The Proposed Sampling Method Based on G-optimal}\label{sec44}
In this section, we will introduce our proposed sampling method based on the G-optimal experiment design. To be specific, we will first give some special cases to explain the motivation of our method. Then, we will prove several theoretical results showing the connection between our method and D-optimal and the performance guarantee of our method.
 
\subsection{Sampling with Prior Knowledges}\label{subsec2}
Assume that $n_i$ are independent identically distributed $\mathcal{N}(0,\sigma^2)$ random variables. Under this assumption, the noisy model~\eqref{8} can be rewritten as the following $M$ equations
\begin{align}
	y_{\mathcal{S}_1} &= \mathbf{v}_{\mathcal{S}_1:}\widehat{\mathbf{x}}_{K} + n_1,
	\nonumber\\ &\cdots
	\nonumber\\ y_{\mathcal{S}_M} &= \mathbf{v}_{\mathcal{S}_M:}\widehat{\mathbf{x}}_{K} + n_M,
\end{align}
where $\mathbf{v}_{\mathcal{S}_i:} \in \mathbb{R}^{1 \times K}$ is the ${\mathcal{S}_i}$-th row of $\mathbf{V}_K$.  Then, the maximum-likelihood estimate of $\mathbf{\widehat{x}}_K$ can be given by
\begin{equation}
	\mathbf{\widehat{x}}_K^\prime = \left(\sum_{i=1}^{M}\mathbf{v}_{\mathcal{S}_i:}^\top\mathbf{v}_{\mathcal{S}_i:}\right)^{-1}\sum_{i=1}^{M}y_{\mathcal{S}_i}\mathbf{v}_{\mathcal{S}_i:}^\top.
\end{equation}
Instead of considering the reconstruction error between the recovered signal and the original signal as in~\eqref{error}, we consider the spectral estimation error $\mathbf{e}^\prime=\mathbf{\widehat{x}}_K-\mathbf{\widehat{x}}_K^\prime$, which has zero mean and covariance
\begin{equation}
	\boldsymbol{\Sigma} = \sigma^2 \left(\sum_{i=1}^{M}\mathbf{v}_{\mathcal{S}_M:}^\top\mathbf{v}_{\mathcal{S}_M:}\right)^{-1},
\end{equation}
where the variance of the $i$-th coordinate of the estimation error $\mathbf{e}^\prime$ is $\Sigma_{ii}$. Suppose that the variance $\sigma$ of Gauss distribution is known. Then, the worst case coordinate error variance is the largest diagonal entry of the covariance matrix. In this case, choosing the sampling set to minimize this measure can be expressed as the following problem:
\begin{equation}
\begin{array}{ll}
     \underset{z_1, \cdots, z_n}{\min} & \underset{j=1, \cdots, k}{\max}\left(\left(\sum_{i=1}^{n} z_{i} \mathbf{v}_{i:}^\top \mathbf{v}_{i:}\right)^{-1}\right)_{j j}\\ 
~\\
\text { s.t. } & \sum_{i=1}^n z_i=m, \\ 
~\\
& z_{i} \in\{0,1\}, \quad i=1, \cdots, n.
\end{array}
\end{equation}
Here, $z_i$ can be viewed as an indicator where $z_i = 1$ if the $i$-th element $\mathbf{v}_{i:}^\top \mathbf{v}_{i:}$ is selected and $z_i = 0$ if the $i$-th element is not included. The optimization problem can be solved via greedy search. As shown in~\cite{joshi2008sensor}, this problem can be transformed to:
\begin{equation}\label{20}
     \begin{array}{ll}
     \underset{t, z_1, \cdots, z_n}{\min}  & t \\
     ~\\
     \text{s.t.} &  \left[\begin{array}{cc}t & \mathbf{e}_{j}^{\top} \\ 
     ~ \\
     \mathbf{e}_{j} & \sum_{i=1}^{m} z_{i} \mathbf{v}_{i:}^\top \mathbf{v}_{i:}\end{array}\right] \succeq 0,~j = 1,2,\cdots, k\\
     ~\\
     & \sum_{i=1}^n z_i=m,\\
     ~\\
     & z_{i} \in\{0,1\}, \quad i=1, \cdots, n,
     \end{array}
\end{equation}
where $\mathbf{e}_i \in \mathbb{R}^{k\times 1}$ has $1$ in the $i$-th position and $0$ in other positions. Considering the integer constraint on $z_i$, however, problem~\eqref{20} may be difficult to solve directly. One feasible way is to relax the integer constraint to $0 \leq z_i \leq 1$ and view $z_i$ as the probability of the $i$-th element $\mathbf{v}_{i:}^\top \mathbf{v}_{i:}$ being selected. After solving the program and obtain all solutions $z_i^\prime$, we arrange them and let the $m$ largest ones to be $1$ and the rest to be $0$. In this way, the original problem is solved approximately.

So far, we have worked on sampling with Gauss-noise prior. We next extend this method to Bayesian case where the graph signal $\mathbf{x}$ is drawn from the multivariate normal distribution
\begin{equation}
\quad p\left(\mathbf{x}\right) \propto \exp \left(-\left(\mathbf{x}-\boldsymbol{\mu}\right)^{\top} \boldsymbol{\Sigma}_\mathbf{x}^{-1}\left(\mathbf{x}-\boldsymbol{\mu}\right)\right).
\end{equation}
Without loss of generality,  assume that \(\boldsymbol{\Sigma}_\mathbf{x}\) is full-rank. If \(\boldsymbol{\mu}\) is constant over the vertex set, this prior is analogous to the generative model for a Gaussian wide-sense stationary random process on graphs. So, we do not restrict \(\boldsymbol{\mu}\) to be constant. Remember that $\mathbf{V}_K$ is orthogonal, and thus the vector $\widehat{\mathbf{x}}_K$ also satisfies the multivariate normal distribution with mean $\mathbf{V}_K \boldsymbol{\mu}$ and covariance $\mathbf{V}_K\boldsymbol{\Sigma}_\mathbf{x}\mathbf{V}_K^\top$. Moreover, assume that the noise $n_i$ are independent random Gauss variables, we can derive the error variance of the maximum posteriori probability (MAP) estimate of $\widehat{\mathbf{x}}_K$ as
\begin{equation}
\boldsymbol{\Sigma}_B = \sigma^2\left(\sum_{i=1}^{M}\mathbf{v}_{\mathcal{S}_M:}^\top\mathbf{v}_{\mathcal{S}_M:} + \left(\mathbf{V}_K\boldsymbol{\Sigma}_\mathbf{x}\mathbf{V}_K^\top\right)^{-1}\right)^{-1}.
\end{equation}
A similar approach can be easily obtained to minimize the worst-case coordinate error variance. 

\subsection{The Proposed Sampling Method GOD}
From Section~\ref{subsec2}, we show that through minimizing the maximum diagonal value, we minimize the worst-case error variance of interested prediction error. Hence we give our proposed sampling method based on this idea named \textbf{G}-\textbf{O}ptimal \textbf{D}esign graph sampling (GOD), which has two slightly different forms:
\begin{equation}
	\mathcal{S}=\underset{|\mathcal{S}|=M}{\arg \min }~{\max \operatorname{diag}}~ {\mathbf{V}}_{K}\left({\mathbf{V}}_{\mathcal{S} K}^{\top} {\mathbf{V}}_{\mathcal{S} K}\right)^{-1} {\mathbf{V}}_{K}^{\top},
\end{equation}
or 
\begin{equation}
	\mathcal{S}=\underset{|\mathcal{S}|=M}{\arg \min }~{\max \operatorname{diag} }~\left({\mathbf{V}}_{\mathcal{S} K}^{\top} {\mathbf{V}}_{\mathcal{S} K}\right)^{-1}.
\end{equation}
Here $\max\operatorname{diag}(\mathbf{A})$ means calculating the maximum diagonal element of of the matrix $\mathbf{A}$. For a concise representation, we represent this operation with $f()$.  That is:
\begin{equation}\label{10}
	\mathcal{S}=\underset{|\mathcal{S}|=M}{\arg \min }~f\left( {\mathbf{V}}_{K}\left({\mathbf{V}}_{\mathcal{S} K}^{\top} {\mathbf{V}}_{\mathcal{S} K}\right)^{-1} {\mathbf{V}}_{K}^{\top}\right),
\end{equation}
or 
\begin{equation}\label{11}
	\mathcal{S}=\underset{|\mathcal{S}|=M}{\arg \min }~f\left( \left({\mathbf{V}}_{\mathcal{S} K}^{\top} {\mathbf{V}}_{\mathcal{S} K}\right)^{-1}\right).
\end{equation}
This problem is combinatorial and NP-hard. As with other sampling methods based on optimal experiments design, the greedy method is used to select nodes. In the $k$-step, we choose the next node $j^\star$ satisfying
\begin{equation}
j^\star = \underset{j \in \mathcal{V}\backslash\mathcal{S}_{\text{k-1}}}{\arg\min}~f\left({\mathbf{V}}_{K}\left(\mathbf{V}_{\mathcal{S}_{\text{k-1}}\bigcup\{j\}K}^\top\mathbf{V}_{\mathcal{S}_{\text{k-1}}\bigcup\{j\}K}\right)^{-1}{\mathbf{V}}_{K}^\top\right),
\end{equation}
or
\begin{equation}
j^\star = \underset{j \in \mathcal{V}\backslash\mathcal{S}_{\text{k-1}}}{\arg\min}~f\left((\mathbf{V}_{\mathcal{S}_{\text{k-1}}\bigcup\{j\} K}^\top\mathbf{V}_{\mathcal{S}_{\text{k-1}}\bigcup\{j\}K})^{-1}\right).
\end{equation}
where $\mathcal{S}_{\text{k-1}}$ denotes the sampling set after the ($k-1$)-th step. It is worth saying that the above two forms are not equivalent in general. We choose Eq.~\eqref{11} as our GOD algorithm for its better reconstruction performance and lower computation complexity. The algorithm is summarized in Alg.~\ref{GOD}.
\begin{algorithm}[t!]
	\caption{The GOD Algorithm}
	\label{GOD}
	\begin{algorithmic}
		\Require Graph Laplacian Operator $\mathbf{L}$, sampling budget $M$, sampling set $\mathcal{S} = \emptyset$.		
		\State Compute the eigen-decomposition of $\mathbf{L}$.
		\While {$|\mathcal{S}| \le M$}		
		\State $j^\star = \underset{j \in \mathcal{V}\backslash\mathcal{S}_{\text{k-1}}}{\arg\min}~f\left(\left(\mathbf{V}_{\mathcal{S}_{\text{k-1}}\bigcup\{j\}}^\top\mathbf{V}_{\mathcal{S}_{\text{k-1}}\bigcup\{j\}}\right)^{-1}\right)$		
		\State Update $\mathcal{S} \leftarrow \mathcal{S} \cup \{j^\star\}$.	
		\EndWhile		
		\Ensure $\mathcal{S}$.
	\end{algorithmic}
\end{algorithm}

As discussed in Section~\ref{subsec2}, the above problem can be transformed into semi-definite programming and be solved by any optimization tool like interior-point methods. We try both greedy-based and convex-based methods on a toy model. The graph is a sensor graph with $10$ nodes with $2$-bandlimited graph signals. We compare the reconstruction RMSE under different sampling sizes. The results are averaged over $50$ times. As shown in Fig.~\ref{toy2}, the greedy performs much better than the convex-based method. Also, notice that the convex-based method needs to know the eigen-decomposition of the graph Laplacian $\mathbf{L}$, which is highly complicated to solve. Thus in the following part, we focus on the greedy-based method as Alg.~\ref{GOD}.
\begin{figure}[t!]
	\centering
	\includegraphics[scale = 0.245]{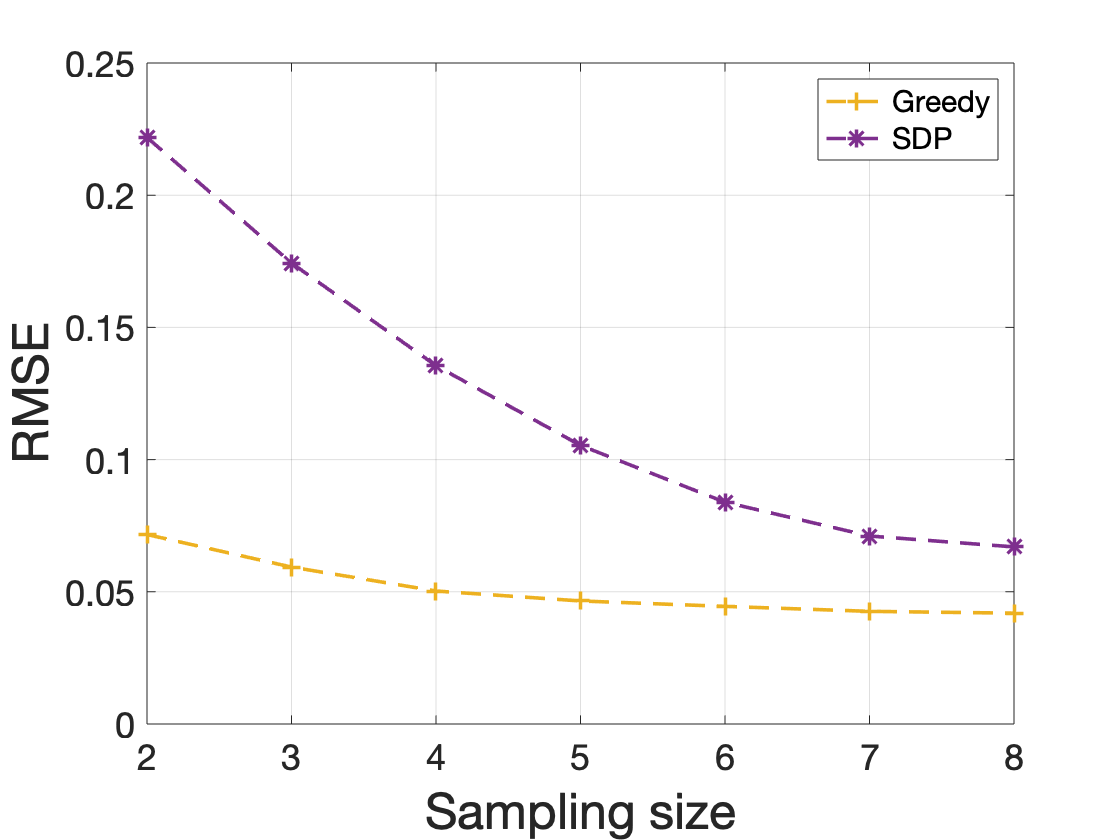}
	\caption{An example of two optimization methods}
	\label{toy2}
\end{figure}

\subsection{Relation Between GOD and D-optimal}
Before giving a theoretical analysis of~\eqref{11}, we first consider an augmented objective by adding a weighted identity matrix , which is named \textbf{A}ug \textbf{G}-\textbf{O}ptimal \textbf{D}esign graph signal sampling (AGOD):
\begin{equation}\label{21}
\mathcal{S}=\underset{|\mathcal{S}|=M}{\arg \min }~f\left(\left({\mathbf{V}}_{\mathcal{S} K}^{\top} {\mathbf{V}}_{\mathcal{S} K} + \mu \mathbf{I}\right)^{-1}\right).
\end{equation}
Notice that if we choose a pretty small $\mu$, problem~\eqref{11} and~\eqref{21} are similar. We show the good performance of our algorithm in this case in the experimental part. The effects of $\mu $ are detailed as follows:
\begin{enumerate}
	\item When the sampling budget is smaller than bandwidth, in other words, $|\mathcal{S}|<K$, the rank of ${\mathbf{V}}_{\mathcal{S} K}^{\top} {\mathbf{V}}_{\mathcal{S} K}$ is $|\mathcal{S}|$. That means the matrix is not full rank. Thus we need to calculate its pseudo-inverse rather than inverse matrix. After adding an identical matrix we can ensure that ${\mathbf{V}}_{\mathcal{S} K}^{\top} {\mathbf{V}}_{\mathcal{S} K} + \mu \mathbf{I}$ is always invertible, which paves a way to solve problems when sampling size is smaller than bandwidth.  
	\item As pointed out in~\cite{8827302}, $\mu $ is a small weight (shift) parameter. If we choose a small $\mu$ (approximate $0.01$ in experiments), Eq.~\eqref{21} is nearly equivalent to Eq.~\eqref{11}. However, small $\mu$ would cause the matrix inverse computation to be inaccurate on a precision-constrained platform. This is because although ${\mathbf{V}}_{\mathcal{S} K}^{\top} {\mathbf{V}}_{\mathcal{S} K} + \mu \mathbf{I}$ is a positive definite matrix, it can potentially be poorly conditioned. Thus in the part of the experiment, we adopt the choice of $\mu $ as suggested in~\cite{8827302}. 
\end{enumerate}

\begin{algorithm}[t!]
	\caption{The AGOD Algorithm }
	\label{AGOD}
	\begin{algorithmic}
		\Require Graph Laplacian Operator $\mathbf{L}$, sampling budget $M$, parameter $\mu$, sampling set $\mathcal{S} = \emptyset$.		
		\State Compute the eigen-decomposition of $\mathbf{L}$.
		\While {$|\mathcal{S}| \le M$}		
		\State $j^\star = \underset{j \in \mathcal{V}\backslash\mathcal{S}_{\text{k-1}}}{\arg\min}~f\left(\left(\mathbf{V}_{\mathcal{S}_{\text{k-1}}\bigcup\{j\}}^\top\mathbf{V}_{\mathcal{S}_{\text{k-1}}\bigcup\{j\}} + \mu \mathbf{I}\right)^{-1}\right)$		
		\State Update $\mathcal{S} \leftarrow \mathcal{S} \cup \{j^\star\}$.	
		\EndWhile		
		\Ensure $\mathcal{S}$.
	\end{algorithmic}
\end{algorithm}

We adopt a greedy algorithm named AGOD summarized in Alg.~\ref{AGOD} as discussed before. We then give an intuitive performance discussion about the performance of AGOD, which established a connection with the D-optimal sampling method as follows.
\begin{proposition}\label{proposition1}
    Minimizing the objective function $f\left(\left({\mathbf{V}}_{\mathcal{S} K}^{\top} {\mathbf{V}}_{\mathcal{S} K} + \mu \mathbf{I}\right)^{-1}\right)$ is equivalent to minimizing the upper bound of an augment objective function of D-optimal $\frac{1}{K}\ln \left|\left({\mathbf{V}}_{\mathcal{S} K}^{\top} {\mathbf{V}}_{\mathcal{S} K} + \mu \mathbf{I}\right)^{-1}\right|$.
\end{proposition}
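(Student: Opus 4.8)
The plan is to sandwich the augmented D-optimal objective $\frac{1}{K}\ln\bigl|(\mathbf{V}_{\mathcal{S}K}^{\top}\mathbf{V}_{\mathcal{S}K}+\mu\mathbf{I})^{-1}\bigr|$ between a monotone transform of the GOD/AGOD objective $f\bigl((\mathbf{V}_{\mathcal{S}K}^{\top}\mathbf{V}_{\mathcal{S}K}+\mu\mathbf{I})^{-1}\bigr)$ and conclude by monotonicity of the logarithm. Throughout I write $\mathbf{B}=(\mathbf{V}_{\mathcal{S}K}^{\top}\mathbf{V}_{\mathcal{S}K}+\mu\mathbf{I})^{-1}$. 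The first thing I would record is that $\mathbf{V}_{\mathcal{S}K}^{\top}\mathbf{V}_{\mathcal{S}K}$ is positive semidefinite and $\mu>0$, so $\mathbf{V}_{\mathcal{S}K}^{\top}\mathbf{V}_{\mathcal{S}K}+\mu\mathbf{I}$ is symmetric positive definite and hence invertible; consequently $\mathbf{B}$ is symmetric positive definite as well. This guarantees that its eigenvalues $\beta_{1},\dots,\beta_{K}$ are strictly positive, its diagonal entries $\mathbf{B}_{ii}$ are positive, and $|\mathbf{B}|>0$, so that every logarithm and every application of AM-GM below is legitimate. This is also the one place where the augmentation parameter $\mu$ is essential.

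Next I would relate the determinant to the trace through the arithmetic-geometric mean inequality applied to the spectrum: $|\mathbf{B}|^{1/K}=\bigl(\prod_{i}\beta_{i}\bigr)^{1/K}\le \frac{1}{K}\sum_{i}\beta_{i}=\frac{1}{K}\operatorname{Tr}(\mathbf{B})$. Taking logarithms gives $\frac{1}{K}\ln|\mathbf{B}|\le \ln\!\bigl(\tfrac{1}{K}\operatorname{Tr}(\mathbf{B})\bigr)$. I would then connect the trace to the quantity actually computed by the algorithm: since $\operatorname{Tr}(\mathbf{B})=\sum_{i}\mathbf{B}_{ii}$ is a sum of $K$ diagonal entries, each bounded by the maximal one, we have $\operatorname{Tr}(\mathbf{B})\le K\max_{i}\mathbf{B}_{ii}=Kf(\mathbf{B})$, whence $\frac{1}{K}\operatorname{Tr}(\mathbf{B})\le f(\mathbf{B})$.

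Chaining these two estimates and using that $\ln$ is increasing yields the single inequality $\frac{1}{K}\ln|\mathbf{B}|\le \ln f(\mathbf{B})$, so $\ln f(\mathbf{B})$ is an upper bound of the augmented D-optimal objective. Because $\ln$ is strictly increasing, $\arg\min_{\mathcal{S}}f(\mathbf{B})=\arg\min_{\mathcal{S}}\ln f(\mathbf{B})$, and therefore minimizing the AGOD objective $f(\mathbf{B})$ is exactly minimizing the surrogate upper bound $\ln f(\mathbf{B})$ of $\frac{1}{K}\ln|\mathbf{B}|$, which is the assertion. (A one-line alternative to the two-step chain is Hadamard's inequality for positive definite matrices, $|\mathbf{B}|\le\prod_{i}\mathbf{B}_{ii}\le f(\mathbf{B})^{K}$, giving the same bound directly; I would keep the trace route as the primary argument since it ties back to the A-optimal discussion.)

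The mechanics here are two textbook inequalities, so the substantive work is interpretive rather than technical. The step needing the most care is the precise meaning of \emph{equivalent}: the objectives $f(\mathbf{B})$ and $\frac{1}{K}\ln|\mathbf{B}|$ are genuinely different and need not share a minimizer, so I would state explicitly that the equivalence claimed is between minimizing $f(\mathbf{B})$ and minimizing its monotone surrogate upper bound $\ln f(\mathbf{B})$, not between the two original functionals. I would also flag that the bound can be loose, since equality in the AM-GM step requires all $\beta_{i}$ equal and equality in the trace step requires all diagonal entries equal; this qualifies the result as an intuitive justification (as the excerpt itself frames it) rather than an exact reduction to D-optimal design.
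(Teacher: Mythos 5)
Your proof is correct, and it reaches the same key inequality as the paper — namely $\frac{1}{K}\ln\left|\mathbf{B}\right|\le\ln f(\mathbf{B})$ for $\mathbf{B}=\left(\mathbf{V}_{\mathcal{S}K}^{\top}\mathbf{V}_{\mathcal{S}K}+\mu\mathbf{I}\right)^{-1}$ — but by a genuinely different chain. You go through the trace: AM--GM on the spectrum gives $|\mathbf{B}|^{1/K}\le\frac{1}{K}\operatorname{Tr}(\mathbf{B})$, and the average of the $K$ diagonal entries is at most the largest one, so $\frac{1}{K}\operatorname{Tr}(\mathbf{B})\le f(\mathbf{B})$. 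The paper instead proves Lemma~\ref{lemma1}: $\max_i c_{ii}\ge\left(\prod_i c_{ii}\right)^{1/K}\ge\left(\prod_i\lambda_i\right)^{1/K}$, i.e., maximum $\ge$ geometric mean of the diagonal entries $\ge$ geometric mean of the eigenvalues; the second step is Hadamard's inequality for positive definite matrices, which the paper derives from scratch via concavity of $\log$ and the doubly stochastic matrix built from the squared entries of the orthogonal eigenvector matrix. So your primary argument replaces the one nontrivial step of the paper's lemma with two elementary facts (the Hadamard route you mention parenthetically as an alternative is exactly the paper's argument). What each buys: your chain needs no majorization machinery and ties naturally to the A-optimal (trace) criterion of Section~\ref{2.3}, while the paper's stronger lemma, phrased through the product of diagonal entries, is stated once and then reused verbatim in the proof of Proposition~\ref{proposition2} in Appendix~\ref{pro2}, so it does double duty. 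Your interpretive caveat is also on target and matches the paper's own framing: the ``equivalence'' asserted is between minimizing $f(\mathbf{B})$ and minimizing an upper bound on the augmented D-optimal objective (the two functionals need not share minimizers, and the bound is loose unless the eigenvalues, respectively diagonal entries, are all equal); your observation that $\mu>0$ is what makes $\mathbf{B}$ well defined and positive definite likewise mirrors the paper's opening remark in its proof.
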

\begin{proof}
    The detailed proof is shown in Appendix~\ref{prolemma1}.
\end{proof}

We point out that minimizing the objective function $f\left( {\mathbf{V}}_{K}\left({\mathbf{V}}_{\mathcal{S} K}^{\top} {\mathbf{V}}_{\mathcal{S} K}\right)^{-1} {\mathbf{V}}_{K}^{\top}\right)$ is also equivalent to minimizing the upper bound of an augment objective function of D-optimal $\frac{1}{K}\ln \left|\left({\mathbf{V}}_{\mathcal{S} K}^{\top} {\mathbf{V}}_{\mathcal{S} K} + \mu \mathbf{I}\right)^{-1}\right|$. It is obvious considering
	\begin{align}
		 &\left|\left({\mathbf{V}}_{K}\left({\mathbf{V}}_{\mathcal{S} K}^{\top} {\mathbf{V}}_{\mathcal{S} K}+\mu\mathbf{I}\right)^{-1} {\mathbf{V}}_{K}^{\top}\right)\right|\nonumber\\
		 =&\left|{\mathbf{V}}_{K}\right|\left|\left({\mathbf{V}}_{\mathcal{S} K}^{\top} {\mathbf{V}}_{\mathcal{S} K}+\mu\mathbf{I}\right)^{-1}\right| \left|{\mathbf{V}}_{K}^{\top}\right|\nonumber\\
		 =&\left|\left({\mathbf{V}}_{\mathcal{S} K}^{\top} {\mathbf{V}}_{\mathcal{S} K}+\mu\mathbf{I}\right)^{-1}\right|.
	\end{align}
The last equation holds because ${\mathbf{V}}_{K}{\mathbf{V}}_{K}^{\top} = \mathbf{I}$. Proposition~\ref{proposition1} illustrates the relationship between our proposed AGOD and a variant of D-optimal. To show the gap of two objective values in Proposition~\ref{proposition1}, we simulate some toy experiments on a sensor graph with $120$ nodes in Fig.~\ref{MLE}. We conduct with sampling budgets between $5$ and $40$. As shown, the objective function values of both methods (named D-D and G-G) go down as the sampling size increases. We also calculate $\frac{1}{K}\ln \left|\left({\mathbf{V}}_{\mathcal{S} K}^{\top} {\mathbf{V}}_{\mathcal{S} K} + \mu \mathbf{I}\right)^{-1}\right|$ with nodes sampled through our proposed AGOD algorithm (named G-D). It can be found that our proposed can also decrease the objective function value of D-optimal and the gap between G-D and D-D is small, especially with a slightly large sampling budget.

\begin{figure}[t!]
	\centering
	\includegraphics[scale = 0.245]{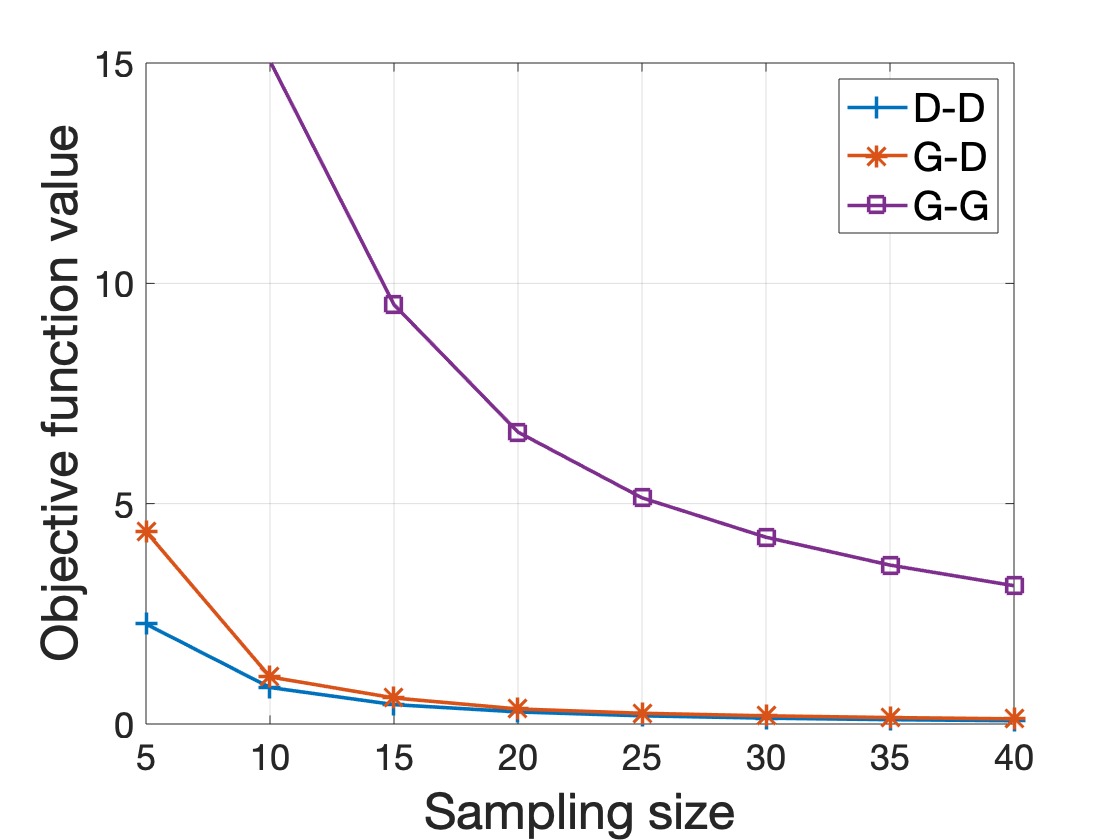}
	\caption{Objective function values gaps under different sampling size. G-G means calulating $f\left(\left({\mathbf{V}}_{\mathcal{S} K}^{\top} {\mathbf{V}}_{\mathcal{S} K} + \mu \mathbf{I}\right)^{-1}\right)$ with nodes sampled through AGOD algorithm. G-D means calulating $\frac{1}{K}\ln \left|\left({\mathbf{V}}_{\mathcal{S} K}^{\top} {\mathbf{V}}_{\mathcal{S} K} + \mu \mathbf{I}\right)^{-1}\right|$ with nodes sampled through AGOD algorithm. D-D means calulating $\frac{1}{K}\ln \left|\left({\mathbf{V}}_{\mathcal{S} K}^{\top} {\mathbf{V}}_{\mathcal{S} K} + \mu \mathbf{I}\right)^{-1}\right|$ with nodes sampled through D-optimal~\cite{Uncertaintyprincipleandsampling}. }
	\label{MLE}
\end{figure}

\subsection{Performance Analyses of \emph{AGOD}}
Although Lemma 1 gives an intuitional understanding of AGOD, it does not provide a performance guarantee. To do that, this section derives near-optimality results that hold for the proposed algorithm based on supermodularity theory.

Supermodularity is defined for functions defined over subsets of a larger set. Intuitively, supermodularity captures the “diminishing returns” property of given functions, which leads to bounds on the suboptimality of their greedy minimization~\cite{nemhauser1978analysis}. Common supermodular functions include the rank, $\log\det$, or Von-Neumann entropy of a matrix~\cite{bach2013learning}. The relative suboptimality value is then defined as follows to evaluate different solutions. 
\begin{definition}
	Let \(g^{*}=g\left(\mathcal{S}^{*}\right)\) be the optimal value of problem
	\begin{equation}\label{44}
	\min _{\mathcal{S} \subseteq \mathcal{V},|\mathcal{S}|=M} g(\mathcal{S}),
	\end{equation}
	and \(\hat{\mathcal{S}}\) is the solution of one algorithm with \(|\hat{\mathcal{S}}|=M\), then the
	relative suboptimality is defined as
	\begin{equation}\label{32}
		r=\frac{g(\hat{\mathcal{S}})-g^{*}}{g(\varnothing)-g^{*}}.
	\end{equation}
\end{definition}
If the objective function \(g(\mathcal{S})\) is monotonic decreasing and supermodular, the value of \(r\) of greedy solution is upper-bounded by \(e^{-1}\). However, supermodularity is a strict condition. To study the proposed objective function, we introduce a general definition of approximate supermodularity~\cite{8047995}.
\begin{definition}
	A set function \(g: 2^{\mathcal{V}} \rightarrow \mathbb{R}\) is \(\alpha\)-supermodular if for all sets
\(\mathcal{A} \subseteq \mathcal{B} \subseteq \mathcal{V}\) and all \(u \notin \mathcal{B}\) it holds that
\begin{equation}\label{supermodular}
g(\mathcal{A} \cup\{u\})-g(\mathcal{A}) \leq \alpha[g(\mathcal{B} \cup\{u\})-g(\mathcal{B})]
\end{equation}
for \(\alpha \geq 0\).
\end{definition}
For \(\alpha \geq 1\), Eq.~\eqref{supermodular} is equivalent to the traditional definition of supermodularity, in which case we refer to the function simply as supermodular/submodular~\cite{bach2013learning}. For \(\alpha \in[0,1)\), however, \(g\) is said to be approximately supermodular/submodular. Notice that~\eqref{supermodular} always holds for \(\alpha=0\) if \(g\) is monotone decreasing. Indeed, \(g(\mathcal{A} \cup\{u\})-g(\mathcal{A}) \leq 0\) in this case. Thus, \(\alpha\)-supermodularity is only of interest when \(\alpha\) takes the largest value for which~\eqref{supermodular} holds, i.e.,
\begin{equation}
\alpha=\min _{\mathcal{A} \subseteq \mathcal{B} \subseteq \mathcal{V}, j \notin \mathcal{B}} \frac{g(\mathcal{A} \cup\{j\})-g(\mathcal{A})}{g(\mathcal{B} \cup\{j\})-g(\mathcal{B})}.
\end{equation}

Given $\alpha$, the following lemma provide the theoretical guarantee of the solution of problem~\eqref{44}~\cite{8047995}:
\begin{lemma}\label{lemma2}
	Let \(g^{*}=g\left(\mathcal{S}^{*}\right)\) be the optimal value of problem~\eqref{44} and \(\mathcal{S}_{l}\) is the set obtained via greedy algorithm with \(l\) samples. If function \(g\) is (i) monotone decreasing and (ii) \(\alpha\)-supermodular, then
	\begin{equation}
	\frac{g\left(\mathcal{S}_{l}\right)-g^{*}}{g(\varnothing)-g^{*}} \leq\left(1-\frac{\alpha}{M}\right)^{l} \leq e^{-\alpha l / M}.
	\end{equation}
\end{lemma}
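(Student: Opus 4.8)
The plan is to establish a one-step geometric contraction of the optimality gap and then iterate it. Writing $\delta_l := g(\mathcal{S}_l) - g^*$, I aim to show that every greedy step satisfies $\delta_{l+1} \le \left(1 - \frac{\alpha}{M}\right)\delta_l$. Granting this, the first claimed bound follows immediately by induction on $l$, starting from $\delta_0 = g(\mathcal{S}_0)-g^* = g(\varnothing)-g^*$, and the second bound follows from the elementary inequality $1 - x \le e^{-x}$ applied with $x = \alpha/M$ to each of the $l$ factors. Note that $\delta_l \ge 0$ for all $l \le M$, since any greedy set can be extended to a size-$M$ set whose value is both $\le g(\mathcal{S}_l)$ (monotonicity) and $\ge g^*$ (optimality), so the ratio being bounded is well posed.

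To prove the per-step contraction, I would first exploit monotonicity: since $g$ is decreasing and $\mathcal{S}^* \subseteq \mathcal{S}_l \cup \mathcal{S}^*$, we have $g(\mathcal{S}_l \cup \mathcal{S}^*) \le g(\mathcal{S}^*) = g^*$, hence $g(\mathcal{S}_l \cup \mathcal{S}^*) - g(\mathcal{S}_l) \le -\delta_l$. Next I would telescope this difference along the elements of $\mathcal{S}^* \setminus \mathcal{S}_l = \{u_1, \dots, u_p\}$ with $p \le M$: setting $\mathcal{B}_0 = \mathcal{S}_l$ and $\mathcal{B}_i = \mathcal{S}_l \cup \{u_1, \dots, u_i\}$, the difference equals $\sum_{i=1}^{p}\bigl[g(\mathcal{B}_{i-1}\cup\{u_i\}) - g(\mathcal{B}_{i-1})\bigr]$. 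Applying $\alpha$-supermodularity from~\eqref{supermodular} with $\mathcal{A} = \mathcal{S}_l \subseteq \mathcal{B}_{i-1}$ lower-bounds each summand by $\frac{1}{\alpha}\bigl[g(\mathcal{S}_l\cup\{u_i\}) - g(\mathcal{S}_l)\bigr]$, while greedy optimality gives $g(\mathcal{S}_{l+1}) - g(\mathcal{S}_l) \le g(\mathcal{S}_l\cup\{u_i\}) - g(\mathcal{S}_l)$ for every $i$. Combining the three facts yields $-\delta_l \ge \frac{p}{\alpha}\bigl[g(\mathcal{S}_{l+1}) - g(\mathcal{S}_l)\bigr]$; dividing by $p/\alpha > 0$ and then using $p \le M$ together with $\delta_l \ge 0$ gives $g(\mathcal{S}_{l+1}) - g(\mathcal{S}_l) \le -\frac{\alpha}{p}\delta_l \le -\frac{\alpha}{M}\delta_l$, and adding $\delta_l$ to both sides produces the desired contraction.

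The main obstacle I anticipate is purely the bookkeeping of inequality directions, which is delicate precisely because $g$ is monotone decreasing and therefore every marginal gain $g(\cdot\cup\{u\}) - g(\cdot)$ is nonpositive. In particular, dividing by $\alpha > 0$ in the supermodularity inequality acts on a negative quantity, and the replacement of the count $p$ by the budget $M$ must be performed only after isolating the marginal gain (passing from $-\frac{\alpha}{p}\delta_l$ to $-\frac{\alpha}{M}\delta_l$), not on the pre-divided form, since otherwise the intended inequality direction flips. I would therefore verify at each link of the chain which side is the larger of two nonpositive numbers. A secondary point worth stating carefully is that $p = |\mathcal{S}^*\setminus\mathcal{S}_l| \le M$ follows from $|\mathcal{S}^*| = M$, which is exactly what lets the $p/\alpha$ factor be controlled by $M/\alpha$; once these sign and counting matters are settled, the induction and the exponential bound are routine.
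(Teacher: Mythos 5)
Your proof is correct, and it is essentially the canonical argument: the paper does not prove this lemma at all (it is imported verbatim from \cite{8047995}), and the proof in that reference is precisely your Nemhauser--Wolsey--Fisher-style chain --- monotonicity to compare $g(\mathcal{S}_l\cup\mathcal{S}^\ast)$ with $g^\ast$, telescoping over $\mathcal{S}^\ast\setminus\mathcal{S}_l$, $\alpha$-supermodularity applied with $\mathcal{A}=\mathcal{S}_l$, the greedy selection rule, $p\le M$, and iteration of the one-step contraction. The only points you leave implicit are harmless degenerate cases: your divisions need $\alpha>0$ and $p\ge 1$ (if $\alpha=0$ or $\mathcal{S}^\ast\subseteq\mathcal{S}_l$ the claimed bound holds trivially), and the induction multiplies inequalities by $1-\alpha/M$, which is nonnegative because taking $\mathcal{A}=\mathcal{B}$ in \eqref{supermodular} forces $\alpha\le 1$ whenever some marginal decrement of $g$ is strictly negative.
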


We now give our main theorem:
\begin{theorem}\label{mainTh}
	The set function \(g(\mathcal{S})\) defined as
	\begin{equation}
    	\begin{split}
    		g(\mathcal{S}) &= f \left({\mathbf{V}}_{\mathcal{S} K}^{\top} {\mathbf{V}}_{\mathcal{S} K} + \mu \mathbf{I}\right)^{-1}\\
    		 &= {\max \operatorname{diag} }\left({\mathbf{V}}_{\mathcal{S} K}^{\top} {\mathbf{V}}_{\mathcal{S} K} + \mu \mathbf{I}\right)^{-1},\\
    	\end{split}
    \end{equation}
	is (i) monotone decreasing and (ii) \(\alpha\)-supermodular with
	\begin{equation}
	\alpha \geq \frac{\mu(2+\mu)}{(1+\mu)^2}.
	\end{equation}
\end{theorem}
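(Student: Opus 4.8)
The plan is to treat $g$ as a function of the regularized Gram matrix $\mathbf{M}_{\mathcal{S}} := \mathbf{V}_{\mathcal{S}K}^{\top}\mathbf{V}_{\mathcal{S}K} + \mu\mathbf{I} = \mu\mathbf{I} + \sum_{i\in\mathcal{S}}\mathbf{v}_{i:}^{\top}\mathbf{v}_{i:}$, so that $g(\mathcal{S}) = \max\operatorname{diag}\,\mathbf{M}_{\mathcal{S}}^{-1}$, and to exploit two elementary facts throughout: (i) $\mathbf{M}_{\mathcal{S}}\succeq\mu\mathbf{I}$ for every $\mathcal{S}$, hence $\mathbf{M}_{\mathcal{S}}^{-1}\preceq\mu^{-1}\mathbf{I}$; and (ii) since the rows of the orthonormal matrix $\mathbf{V}$ have unit norm, each row of the truncation $\mathbf{V}_K$ satisfies $\|\mathbf{v}_{i:}\|_2^2\le 1$. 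I would first record these bounds, since they are the only problem-specific ingredients; everything else follows from the rank-one structure of the update $\mathbf{M}_{\mathcal{S}\cup\{u\}} = \mathbf{M}_{\mathcal{S}} + \mathbf{v}_{u:}^{\top}\mathbf{v}_{u:}$.

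For monotonicity (part (i) of the theorem), I would note that adding a node $u$ is a positive-semidefinite rank-one perturbation, so $\mathbf{M}_{\mathcal{S}\cup\{u\}}\succeq\mathbf{M}_{\mathcal{S}}$; because the inverse is operator-antitone, $\mathbf{M}_{\mathcal{S}\cup\{u\}}^{-1}\preceq\mathbf{M}_{\mathcal{S}}^{-1}$, so every diagonal entry can only decrease and therefore so does their maximum. Equivalently, the Sherman--Morrison identity yields the exact per-coordinate decrement
\begin{equation}
(\mathbf{M}_{\mathcal{S}}^{-1})_{jj}-(\mathbf{M}_{\mathcal{S}\cup\{u\}}^{-1})_{jj}=\frac{(\mathbf{M}_{\mathcal{S}}^{-1}\mathbf{v}_{u:}^{\top})_j^{2}}{1+\mathbf{v}_{u:}\mathbf{M}_{\mathcal{S}}^{-1}\mathbf{v}_{u:}^{\top}}\ \ge\ 0,
\end{equation}
which I denote $\Delta_j^{\mathcal{S}}(u)$, writing the total objective decrease as $\delta_{\mathcal{S}}(u) := g(\mathcal{S})-g(\mathcal{S}\cup\{u\})\ge 0$.

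For the $\alpha$-supermodularity (part (ii)), after rewriting the defining inequality of a monotone-decreasing $g$ as $\delta_{\mathcal{A}}(u)\ge\alpha\,\delta_{\mathcal{B}}(u)$ for all $\mathcal{A}\subseteq\mathcal{B}$ and $u\notin\mathcal{B}$, the plan is to pass from the non-additive objective to the single-coordinate decrements $\Delta_j$ and then bound their ratio. Using $\max_j a_j-\max_j b_j\le\max_j(a_j-b_j)$ to upper bound $\delta_{\mathcal{B}}(u)$, together with a lower bound on $\delta_{\mathcal{A}}(u)$ obtained by evaluating $g(\mathcal{A})$ at the coordinate attaining $g(\mathcal{A}\cup\{u\})$, I would sandwich both decreases between single-coordinate decrements, reducing the claim to a lower bound on a ratio of two $\Delta$'s. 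Such a ratio factors into a \emph{self-information} part $\frac{1+\mathbf{v}_{u:}\mathbf{M}_{\mathcal{B}}^{-1}\mathbf{v}_{u:}^{\top}}{1+\mathbf{v}_{u:}\mathbf{M}_{\mathcal{A}}^{-1}\mathbf{v}_{u:}^{\top}}$ and a \emph{curvature} part from the squared numerators. The first is controlled by facts (i)--(ii): its numerator is $\ge 1$ and its denominator is $\le 1+\mu^{-1}=(1+\mu)/\mu$, giving the factor $\mu/(1+\mu)$. Combined with the complementary factor $(2+\mu)/(1+\mu)$ from the numerator comparison via $\mathbf{M}_{\mathcal{A}}^{-1}\succeq\mathbf{M}_{\mathcal{B}}^{-1}$, the product is $\frac{\mu(2+\mu)}{(1+\mu)^2}$; the target indeed factors as $\frac{\mu}{1+\mu}\cdot\frac{2+\mu}{1+\mu}$, which guides this two-factor bookkeeping. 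Invoking Lemma~\ref{lemma2} then converts this $\alpha$ into the stated greedy guarantee.

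The hard part will be the passage through the $\max\operatorname{diag}$ operation: because $g$ is not additive (unlike the A-optimal trace), $\delta_{\mathcal{A}}(u)$ and $\delta_{\mathcal{B}}(u)$ are in general attained at \emph{different} coordinates, so the two single-coordinate decrements left after the sandwich need not share an index, and the squared numerators $(\mathbf{M}_{\mathcal{S}}^{-1}\mathbf{v}_{u:}^{\top})_j^{2}$ are not coordinatewise monotone in the set. The crux is therefore to show that the maximizing coordinate is sufficiently stable, or to select a common worst-case coordinate, so that the curvature factor can be bounded below by $(2+\mu)/(1+\mu)$ uniformly. This is the single step where facts (i)--(ii) and the rank-one decrement formula must be combined most carefully, and it is where I would expect essentially all of the effort to lie.
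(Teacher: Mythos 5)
Your part (i) and your general skeleton --- Sherman--Morrison rank-one decrements, a sandwich of the two $\max\operatorname{diag}$ differences by single-coordinate decrements, and a factorization of the resulting ratio into a ``self-information'' factor times a ``curvature'' factor --- are exactly the paper's route (your sandwich inequality is the paper's auxiliary lemma $-\max(\mathbf{q}-\mathbf{p})\le\max(\mathbf{p})-\max(\mathbf{q})\le\max(\mathbf{p}-\mathbf{q})$ applied to diagonals). But for part (ii) your proposal has a genuine gap. The step you explicitly defer --- handling the fact that $\delta_{\mathcal{A}}(u)$ and $\delta_{\mathcal{B}}(u)$ are attained at different coordinates --- is not a technical detail; it is the entire content of the supermodularity proof, so what you have is a plan rather than a proof. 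The paper addresses it by bounding the numerator decrease from below by the \emph{minimum} diagonal entry of the decrement matrix $C(\mathcal{A},u)=\mathbf{M}_{\mathcal{A}}^{-1}\mathbf{v}_{u:}^{\top}\mathbf{v}_{u:}\mathbf{M}_{\mathcal{A}}^{-1}/(1+\mathbf{v}_{u:}\mathbf{M}_{\mathcal{A}}^{-1}\mathbf{v}_{u:}^{\top})$, bounding the denominator decrease from above by the \emph{maximum} diagonal entry of $C(\mathcal{B},u)$, and then comparing the two via a separate Sherman--Morrison argument showing that $t_{\max}(\mathcal{S},u)=\max\operatorname{diag}\bigl(\mathbf{M}_{\mathcal{S}}^{-1}\mathbf{v}_{u:}^{\top}\mathbf{v}_{u:}\mathbf{M}_{\mathcal{S}}^{-1}\bigr)$ decreases as $\mathcal{S}$ grows; none of this machinery appears in your proposal.

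Worse, your bookkeeping of where the constant comes from is unworkable, so even resolving ``coordinate stability'' as you envision it would not finish the proof. You plan to get $\mu/(1+\mu)$ from the self-information ratio and the complementary factor $(2+\mu)/(1+\mu)>1$ from the curvature ratio. No uniform lower bound exceeding $1$ on the curvature factor can hold: taking $\mathcal{A}=\mathcal{B}$ (allowed in the definition of $\alpha$-supermodularity) makes the full ratio and the self-information factor, and hence the curvature factor, all exactly $1$; and with $\mathcal{B}=\mathcal{A}\cup\{i\}$ for a row $\mathbf{v}_{i:}$ of negligible norm they are arbitrarily close to $1$. The correct accounting, which is what the paper does, extracts the \emph{entire} constant from the self-information ratio and asks of the curvature factor only that it be $\ge 1$: since $\mathbf{V}_{\mathcal{S}K}^{\top}\mathbf{V}_{\mathcal{S}K}\preceq\mathbf{V}_{K}^{\top}\mathbf{V}_{K}=\mathbf{I}$, the spectrum of $\mathbf{M}_{\mathcal{S}}$ lies in $[\mu,1+\mu]$, whence
\begin{equation}
\frac{1+\mathbf{v}_{u:}\mathbf{M}_{\mathcal{B}}^{-1}\mathbf{v}_{u:}^{\top}}{1+\mathbf{v}_{u:}\mathbf{M}_{\mathcal{A}}^{-1}\mathbf{v}_{u:}^{\top}}
\;\ge\;
\frac{1+s/(1+\mu)}{1+s/\mu},
\qquad s=\|\mathbf{v}_{u:}\|_2^2\in[0,1],
\end{equation}
and the right-hand side is decreasing in $s$, hence minimized at $s=1$, where it equals exactly $\frac{\mu(2+\mu)}{(1+\mu)^2}$. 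So you must replace your crude bound ``numerator $\ge 1$'' by the sharper $1+\|\mathbf{v}_{u:}\|_2^2/(1+\mu)$ and move the whole constant into the self-information factor; the $(2+\mu)/(1+\mu)$ you hoped to harvest from the curvature term does not exist.
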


\begin{proof}
	The detailed proof is shown in Appendix~\ref{protheorem}.
\end{proof}
In~\cite{8827302}, authors proposed a set function $$g(\mathcal{S})=\operatorname{Tr}\left({\mathbf{V}}_{\mathcal{S} K}^{\top} {\mathbf{V}}_{\mathcal{S} K} + \mu \mathbf{I}\right)^{-1},$$ which is also monotone decreasing and $\alpha-$supermodular with parameter $\alpha$ satisfying 
\begin{equation}
	\alpha \geq \frac{\mu^{3}(2+\mu)}{(\mu+1)^{4}}.
\end{equation}
It is obvious that our proposed set function has a larger lower bound of $\alpha$. According to Lemma~\ref{lemma2}, with $\mathcal{S}$ increases, our sample set can get closer to the optimal solution faster. Parameter \(\mu\) mimics the 'SNR' in~\cite{8047995} via 'SNR' \(=-10 \log \mu\). We illustrate the lower bound of \(\alpha\) in \((23)\) in terms of the value of \(-10 \log \mu\) in Fig.~\ref{FIG1}. This figure matches the result demonstrated in~\cite{8047995}. We can also see that our bound is larger than~\cite{8827302} (named Tr) with a relative small $\mu$, which indicates that AGOD performs better in theory. 
\begin{figure}[t!]
	\includegraphics[scale = 0.245]{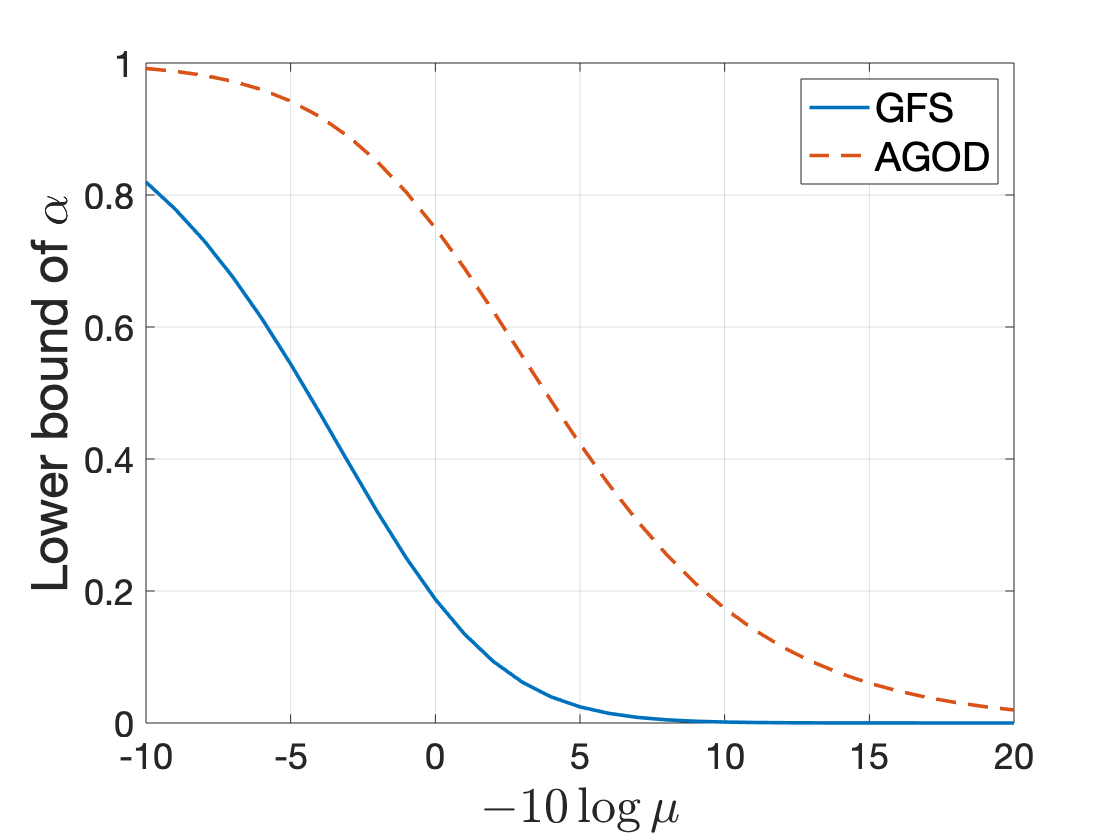}
	\centering
	\caption{Lower bound of $\alpha$ in terms of $ -10\log\mu$}
	\label{FIG1}
\end{figure}

\section{Fast Sampling Method Based on G-optimal}\label{sec5}
As discussed earlier, graphs considered in most real applications are very large. Hence, computing and storing the graph Fourier basis explicitly is often practically infeasible. To alleviate this computation burden, in this section, we propose a fast algorithm without eigen-decomposition operation.

We define the objective function of our fast sampling method, which we call it \textbf{F}ast \textbf{A}ugment \textbf{G}-\textbf{O}ptimal \textbf{D}esign (FAGOD):
	\begin{equation}\label{27}
		\mathcal{S}=\underset{|\mathcal{S}|=M}{\arg \min }~f \left(\left( {\mathbf{V}}_{\mathcal{S} K}{\mathbf{V}}_{\mathcal{S} K}^{\top} + \mu \mathbf{I}\right)^{-1}\right),
	\end{equation}
 where $f()$ means calculating the maximum diagonal element. 

\begin{figure}[h!]
	\includegraphics[scale = 0.245]{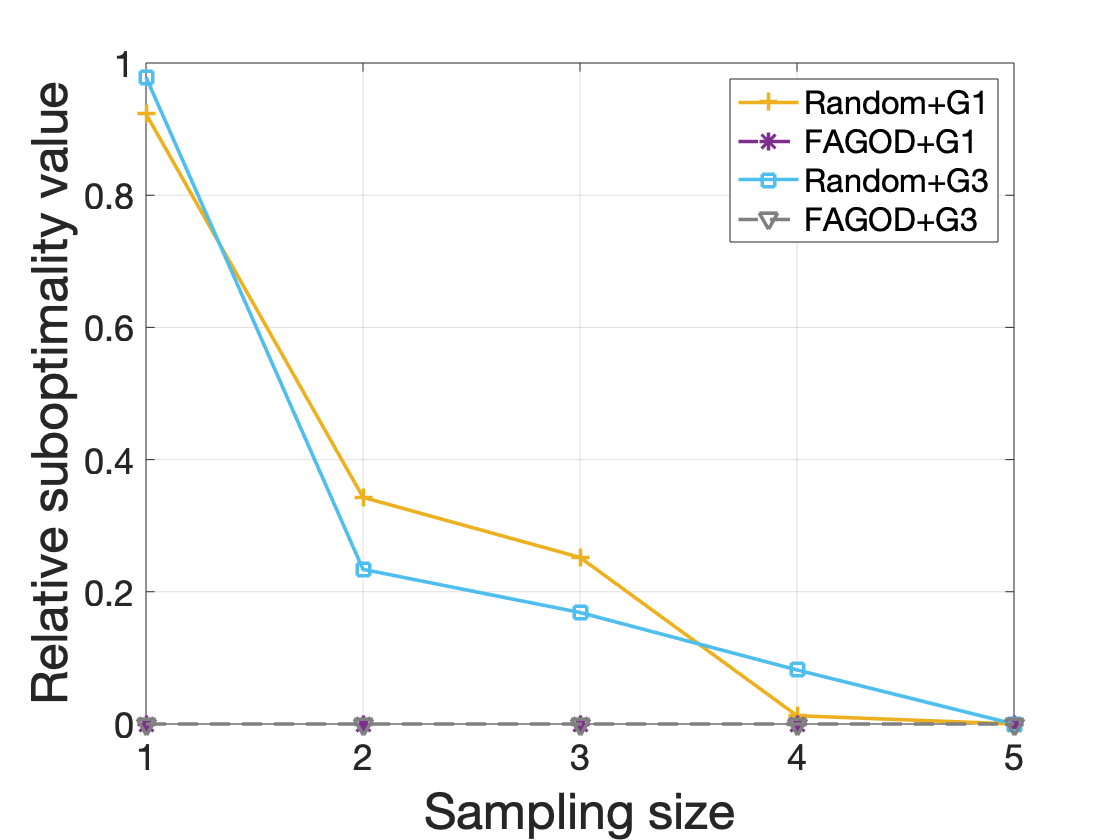}
	\centering
	\caption{Relative suboptimality value in terms of sample size. Here G1 indicates random sensor graph and G3 indicates community graph, which both follow the same settings introduced in Section~\ref{sec6}.}
	\label{Fig2}
\end{figure}

Here ${\mathbf{V}}_{\mathcal{S} K}{\mathbf{V}}_{\mathcal{S} K}^{\top}$ can be viewed as sampling from an idea low-pass graph filter ${\mathbf{V}}_{ K}{\mathbf{V}}_{ K}^{\top}$ with cutoff frequency $K$. However, optimizing the new problem~\eqref{27} still requires computation of an ideal LP filter rather than first $K$ eigenvectors of $\mathbf{L}$. We adopt the method proposed in~\cite{le2017approximate}, which apply Jacobi eigenvalue algorithm to approximate ${\mathbf{V}}_{ K}{\mathbf{V}}_{ K}^{\top}$. This efficient approximation required lower complexity than eigen-decomposition. Specifically, the goal is to solve the following optimization problem
	\begin{equation}\label{40}
		\min _{\widehat{\boldsymbol{\Lambda}}, \mathbf{S}_{1}, \ldots, \mathbf{S}_{J}}\left\|\mathbf{L}-\mathbf{S}_{1} \ldots \mathbf{S}_{J} \widehat{\boldsymbol{\Lambda}} \mathbf{S}_{J}^{\top} \ldots \mathbf{S}_{1}^{\top}\right\|_{F}^{2},
	\end{equation}
where \(\widehat{\boldsymbol{\Lambda}}\) is constrained to be a diagonal matrix, and \(\mathbf{S}_{i}\) are constrained to be {\it Givens} rotation matrices. After solving~\eqref{40}, we obtain an estimate of graph filter ${\mathbf{V}}_{ K}{\mathbf{V}}_{ K}^{\top}$ as $\widetilde{{\mathbf{V}}}_{ K}\widetilde{{\mathbf{V}}}_{ K}^{\top}$, where $\widetilde{{\mathbf{V}}}_{ K}=\mathbf{S}_{1} \ldots \mathbf{S}_{J}$. The advantages of solving~\eqref{40} instead of using a Chebyshev matrix polynomial of $\mathbf{L}$ can be found in~\cite{8827302}. See Algorithm~\ref{Al2} for the whole process.

\begin{algorithm}[t!]
	\caption{The FAGOD Algorithm}
	\label{Al2}
	\begin{algorithmic}
		\Require Graph Laplacian Operator $\mathbf{L}$, sampling budget $M$, parameter $\mu$, sampling set $\mathcal{S} = \emptyset$.
		
		\State Compute low-pass graph filter $\mathbf{T}^{\text {Appro }} = \widetilde{{\mathbf{V}}}_{ K}\widetilde{{\mathbf{V}}}_{ K}^{\top}$.
		\While {$|\mathcal{S}| \le M$}
		
		\State $j^\star = \underset{j \in \mathcal{V}\backslash\mathcal{S}}{\arg\min}~f(\mathbf{T}^{\text {Appro }}_{\mathcal{S}\bigcup{j}}+\mu\mathbf{I})^{-1}$.
		
		\State Update $\mathcal{S} \leftarrow \mathcal{S} \cup \{j^\star\}$.
		
		\EndWhile
		
		\Ensure $\mathcal{S}$.
	\end{algorithmic}
\end{algorithm}

\begin{figure*}[t!]
	\centering 
\subfigure[]{
		\begin{minipage}[t]{0.3\linewidth}			\includegraphics[width=1.2\linewidth]{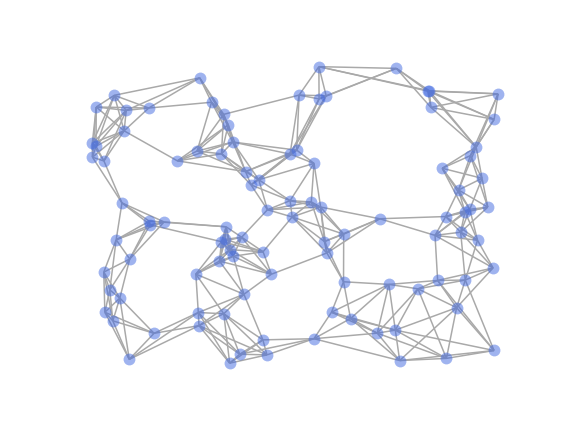}
		\end{minipage}
	} 
	\subfigure[]{
		\begin{minipage}[t]{0.3\linewidth}
				\includegraphics[width=1.2\linewidth]{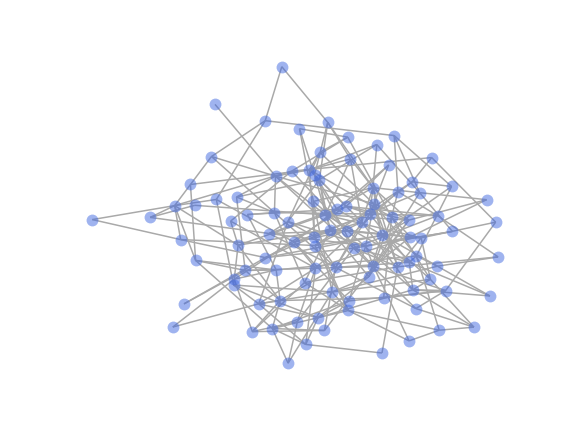}
		\end{minipage}%
	}
	\subfigure[]{
		\begin{minipage}[t]{0.3\linewidth}
			\includegraphics[width=1.2\linewidth]{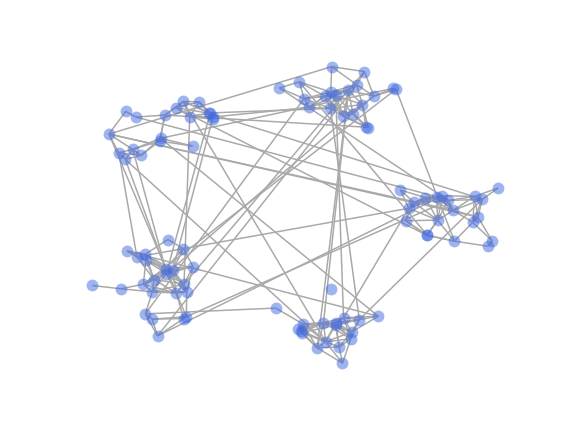}
		\end{minipage}
	}%
	\caption{Illustrations of three types of graph (a) Random sensor graph \textbf{G1}, (b) ER graph \textbf{G2}, (c) Community graph \textbf{G3}.}
		\label{graph}
\end{figure*}

Similarly, to give an intuitional understanding of FAGOD, the following proposition demonstrates the connection between FAGOD and D-optimal:
\begin{proposition}\label{proposition2}
Minimizing the objective function $f \left(\left( {\mathbf{V}}_{\mathcal{S} K}{\mathbf{V}}_{\mathcal{S} K}^{\top} + \mu \mathbf{I}\right)^{-1}\right)$ is equivalent to minimizing the upper bound of an augment objective function of D-optimal $\frac{1}{K}\ln \left|\left({\mathbf{V}}_{\mathcal{S} K}^{\top} {\mathbf{V}}_{\mathcal{S} K} + \mu \mathbf{I}\right)^{-1}\right|$.

\end{proposition}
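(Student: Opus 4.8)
The plan is to follow the same template as the proof of Proposition~\ref{proposition1}, since the two statements share the identical D-optimal surrogate $\frac{1}{K}\ln\left|\left(\mathbf{V}_{\mathcal{S}K}^\top\mathbf{V}_{\mathcal{S}K}+\mu\mathbf{I}\right)^{-1}\right|$. The only new difficulty is that the matrix inside $f$ is now the $M\times M$ Gram matrix $\mathbf{V}_{\mathcal{S}K}\mathbf{V}_{\mathcal{S}K}^\top$ (with $M=|\mathcal{S}|$) rather than the $K\times K$ matrix $\mathbf{V}_{\mathcal{S}K}^\top\mathbf{V}_{\mathcal{S}K}$, so the first task is to reconcile this dimension mismatch. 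I would invoke Sylvester's determinant identity, which holds because $\mathbf{V}_{\mathcal{S}K}\mathbf{V}_{\mathcal{S}K}^\top$ and $\mathbf{V}_{\mathcal{S}K}^\top\mathbf{V}_{\mathcal{S}K}$ share the same nonzero eigenvalues; it gives $\left|\mathbf{V}_{\mathcal{S}K}\mathbf{V}_{\mathcal{S}K}^\top+\mu\mathbf{I}\right|=\mu^{M-K}\left|\mathbf{V}_{\mathcal{S}K}^\top\mathbf{V}_{\mathcal{S}K}+\mu\mathbf{I}\right|$, and hence, writing $\mathbf{P}=\left(\mathbf{V}_{\mathcal{S}K}\mathbf{V}_{\mathcal{S}K}^\top+\mu\mathbf{I}\right)^{-1}$,
\begin{equation}
\frac{1}{K}\ln\left|\left(\mathbf{V}_{\mathcal{S}K}^\top\mathbf{V}_{\mathcal{S}K}+\mu\mathbf{I}\right)^{-1}\right|=\frac{M-K}{K}\ln\mu+\frac{1}{K}\ln\left|\mathbf{P}\right|.
\end{equation}

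Next I would reproduce the inequality chain behind Proposition~\ref{proposition1}, now applied to the $M\times M$ positive-definite matrix $\mathbf{P}$. By the AM--GM inequality on the eigenvalues of $\mathbf{P}$ one has $\ln\left|\mathbf{P}\right|\le M\ln\frac{\operatorname{Tr}(\mathbf{P})}{M}$, and since the trace of a positive-definite matrix is at most its dimension times its largest diagonal entry, $\operatorname{Tr}(\mathbf{P})\le M f(\mathbf{P})$. Combining these yields $\ln\left|\mathbf{P}\right|\le M\ln f(\mathbf{P})$, and substituting into the displayed identity gives the upper bound
\begin{equation}
\frac{1}{K}\ln\left|\left(\mathbf{V}_{\mathcal{S}K}^\top\mathbf{V}_{\mathcal{S}K}+\mu\mathbf{I}\right)^{-1}\right|\le\frac{M-K}{K}\ln\mu+\frac{M}{K}\ln f(\mathbf{P}).
\end{equation}

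Finally I would observe that, over all candidate sets of the fixed size $M$, the term $\frac{M-K}{K}\ln\mu$ is constant and the coefficient $\frac{M}{K}$ is strictly positive; since $\ln$ is increasing, minimizing the right-hand side is exactly minimizing $f(\mathbf{P})=f\left(\left(\mathbf{V}_{\mathcal{S}K}\mathbf{V}_{\mathcal{S}K}^\top+\mu\mathbf{I}\right)^{-1}\right)$, which establishes the asserted equivalence. The only genuine obstacle is the dimension mismatch between the $M\times M$ object inside $f$ and the $K\times K$ object in the determinant; I expect this to be resolved cleanly by Sylvester's identity, and the remaining subtlety is merely to confirm that the resulting additive $\ln\mu$ term, having no dependence on $\mathcal{S}$, does not affect the minimizer.
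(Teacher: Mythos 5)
Your proposal is correct and follows essentially the same route as the paper: your Sylvester determinant identity is precisely the determinant form of the eigenvalue relation the paper imports as Lemma~\ref{lemma5}, and your chain $\ln|\mathbf{P}| \le M\ln\left(\operatorname{Tr}(\mathbf{P})/M\right) \le M\ln f(\mathbf{P})$ plays exactly the role of the paper's Lemma~\ref{lemma1} (maximum diagonal entry dominates the geometric mean of the eigenvalues), so both arguments reduce the $M\times M$ object to the $K\times K$ D-optimal quantity plus a constant $\ln\mu$ term that is irrelevant to the minimizer. If anything, your exponent bookkeeping is cleaner than the paper's final display, which writes the bound with exponents $1/K$ and $1/(M-K)$ where a direct application of its own Lemma~\ref{lemma1} yields $\left(\prod_{i=1}^{K}\lambda_i\right)^{1/M}\left(1/\mu\right)^{(M-K)/M}$.
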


\begin{proof}
	The detailed proof is shown in Appendix~\ref{pro2}.
\end{proof}

Furthermore, to show the near-optimality of the proposed FAGOD algorithm, we conduct some toy experiments on small graphs ($N = 10$) to compute the exact relative suboptimality value $r$ according to Eq.~\eqref{32} of the solution from Algorithm~\ref{Al2}. We find the optimal value through an exhaustive method. For comparison, we also calculate the $r$ value of random sampling, which selects nodes randomly for each sample size. Parameter setting is the same as Section~\ref{sec6}. Recall that \begin{equation}
		r=\frac{g(\hat{\mathcal{S}})-g^{*}}{g(\varnothing)-g^{*}}.
	\end{equation}

	From Fig.~\ref{Fig2} we can see that the relative suboptimality values of our proposed FAGOD method are $0$ at each sample size, which means that the performance of FAGOD is close to the exhaustive optimal solution. Also the resulted value $r$ is much smaller than that of the random solution.

To avoid eigendecomposition in the reconstruction stage, we adopt a biased signal reconstruction method proposed in~\cite{8827302} and more reconstruction error analysis can be found in~\cite{8827302}.

\begin{lemma}
Adding a parameter $\mu$ in~\eqref{21}, we obtain a biased estimate 
\begin{equation}
\mathbf{\widehat{x}} = \mathbf{V}_K\left({\mathbf{V}}_{\mathcal{S} K}^{\top} {\mathbf{V}}_{\mathcal{S}K} + \mu \mathbf{I}\right)^{-1}\mathbf{V}_{\mathcal{S}K}^\top\mathbf{y}_\mathcal{S},
\end{equation}
which can be approximated as
\begin{equation}
\mathbf{\widehat{x}} = \mathbf{V}_{\mathcal{S}K}\mathbf{V}_{\mathcal{S}K}^{\top}\left({\mathbf{V}}_{\mathcal{S}K} {\mathbf{V}}_{\mathcal{S} K}^{\top} + \mu \mathbf{I}\right)^{-1}\mathbf{y}_\mathcal{S}.
\end{equation}
\end{lemma}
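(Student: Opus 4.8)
The plan is to obtain the first display directly from the regularized objective in~\eqref{21} and then collapse the proof to a single matrix identity that trades the $K\times K$ inverse for the $M\times M$ inverse appearing in the approximation. First I would note that the augmented design replaces the BLUE of~\eqref{blue} by its Tikhonov-regularized counterpart, i.e. the minimizer of $\|\mathbf{V}_{\mathcal{S}K}\widehat{\mathbf{x}}_K-\mathbf{y}_\mathcal{S}\|_2^2+\mu\|\widehat{\mathbf{x}}_K\|_2^2$, whose normal equations give $\widehat{\mathbf{x}}_K'=(\mathbf{V}_{\mathcal{S}K}^\top\mathbf{V}_{\mathcal{S}K}+\mu\mathbf{I})^{-1}\mathbf{V}_{\mathcal{S}K}^\top\mathbf{y}_\mathcal{S}$; applying the synthesis operator $\mathbf{V}_K$ produces exactly the first displayed estimator.

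The key step is the push-through (matrix-inversion) identity: for any $\mathbf{A}$ and $\mu>0$,
\begin{equation}
(\mathbf{A}^\top\mathbf{A}+\mu\mathbf{I})^{-1}\mathbf{A}^\top=\mathbf{A}^\top(\mathbf{A}\mathbf{A}^\top+\mu\mathbf{I})^{-1}.
\end{equation}
I would prove it in one line from $\mathbf{A}^\top(\mathbf{A}\mathbf{A}^\top+\mu\mathbf{I})=(\mathbf{A}^\top\mathbf{A}+\mu\mathbf{I})\mathbf{A}^\top$ by left-multiplying with $(\mathbf{A}^\top\mathbf{A}+\mu\mathbf{I})^{-1}$ and right-multiplying with $(\mathbf{A}\mathbf{A}^\top+\mu\mathbf{I})^{-1}$, both of which exist for $\mu>0$ even when $M<K$. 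Taking $\mathbf{A}=\mathbf{V}_{\mathcal{S}K}$ then yields the exact rewriting
\begin{equation}
\widehat{\mathbf{x}}=\mathbf{V}_K\mathbf{V}_{\mathcal{S}K}^\top(\mathbf{V}_{\mathcal{S}K}\mathbf{V}_{\mathcal{S}K}^\top+\mu\mathbf{I})^{-1}\mathbf{y}_\mathcal{S}.
\end{equation}
Using $\mathbf{V}_{\mathcal{S}K}=\boldsymbol{\Psi}\mathbf{V}_K$, the leading factor becomes $\mathbf{V}_K\mathbf{V}_K^\top\boldsymbol{\Psi}^\top$, i.e. the ideal low-pass filter $\mathbf{V}_K\mathbf{V}_K^\top$ composed with the sampling operator, while the matrix inside the inverse is precisely the sampled filter $\mathbf{V}_{\mathcal{S}K}\mathbf{V}_{\mathcal{S}K}^\top=\boldsymbol{\Psi}\mathbf{V}_K\mathbf{V}_K^\top\boldsymbol{\Psi}^\top$.

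The final move, and the only place where equality is relaxed to approximation, is to replace the ideal filter $\mathbf{V}_K\mathbf{V}_K^\top$ by its eigendecomposition-free surrogate $\widetilde{\mathbf{V}}_K\widetilde{\mathbf{V}}_K^\top$ from~\eqref{40}, so that neither the reconstruction nor the sampling step needs the explicit Fourier basis $\mathbf{V}_K$; this gives the second display, written compactly through the sampled filter $\mathbf{V}_{\mathcal{S}K}\mathbf{V}_{\mathcal{S}K}^\top$. I expect the main obstacle to be bookkeeping rather than depth: one must keep the regularizer $\mu\mathbf{I}$ throughout so that both inverses in the push-through identity are well defined, and one must track $\boldsymbol{\Psi}$ carefully so the factor $\mathbf{V}_K\mathbf{V}_{\mathcal{S}K}^\top$ is correctly identified with the (approximated) filter acting through $\boldsymbol{\Psi}^\top$. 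The quality of the resulting approximation is then inherited entirely from the Frobenius-norm guarantee of the Jacobi approximation in~\eqref{40}, which I would invoke rather than re-derive.
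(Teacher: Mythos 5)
Your derivation is sound, but note that the paper itself offers no proof of this lemma: it states the two displays and defers both the construction and the error analysis to the cited reference~\cite{8827302}. So there is no internal proof to match; what you have written fills that gap, and it does so essentially correctly. The Tikhonov reading of the augmented objective yields the first display, and your push-through identity
\begin{equation*}
\left(\mathbf{A}^{\top}\mathbf{A}+\mu\mathbf{I}\right)^{-1}\mathbf{A}^{\top}
=\mathbf{A}^{\top}\left(\mathbf{A}\mathbf{A}^{\top}+\mu\mathbf{I}\right)^{-1},
\end{equation*}
applied with $\mathbf{A}=\mathbf{V}_{\mathcal{S}K}$, is exactly the standard route (and is what underlies the low-complexity reconstruction of~\cite{8827302}); your one-line proof of it and the observation that both inverses exist for any $\mu>0$, even when $M<K$, are correct. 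One point you should have flagged explicitly rather than glossed over: your exact rewriting produces the leading factor $\mathbf{V}_{K}\mathbf{V}_{\mathcal{S}K}^{\top}=\mathbf{V}_{K}\mathbf{V}_{K}^{\top}\boldsymbol{\Psi}^{\top}$, an $N\times M$ matrix, whereas the lemma as printed has $\mathbf{V}_{\mathcal{S}K}\mathbf{V}_{\mathcal{S}K}^{\top}$, an $M\times M$ matrix, in that position. The printed form is dimensionally inconsistent---it would make $\widehat{\mathbf{x}}$ an $M$-vector while the first display is an $N$-vector---so your expression is the correct one and the lemma evidently contains a typo; by claiming your result ``gives the second display'' you silently identified your (correct) formula with the paper's (miswritten) one instead of pointing out the discrepancy. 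With that caveat, your account of where approximation actually enters---replacing the ideal filter $\mathbf{V}_{K}\mathbf{V}_{K}^{\top}$ by the Givens-rotation surrogate $\widetilde{\mathbf{V}}_{K}\widetilde{\mathbf{V}}_{K}^{\top}$ obtained from~\eqref{40}, so that both the greedy selection and the reconstruction avoid the explicit Fourier basis, with the error inherited from the Frobenius-norm guarantee of that factorization---is precisely right and correctly isolates the only non-exact step.
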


\section{Experiments}\label{sec6}
\subsection{Experimental Setup}
We demonstrate the efficacy of our proposed FAGOD sampling algorithm via extensive simulations. All algorithms are performed on MATLAB R2019b. 

We use three types of graphs in GSPBOX~\cite{perraudin2014gspbox} for testing as shown in Fig.~\ref{graph}:
\begin{enumerate}
	\item[1)] \textbf{G1}: Random sensor graph with $N$ nodes. The graph is a weighted sparse graph. Each node is connected to its six nearest neighbours with weights
	\item[2)] \textbf{G2}: Random unweighted graph with Erd{\"o}s-R\'{e}nyi model (ER graph): The
edge connecting probability was set to \(0.05\) with $N$ nodes.
	\item[3)] \textbf{G3}: Community graph with $N$ nodes and random $\lfloor\sqrt{N} / 2\rfloor$ communities.
\end{enumerate}

We set $N = 400$ for all three graphs. The performance of the sampling methods depends on the assumptions about the true signal and sampling noise. We consider the problem in the following scenarios:

\begin{enumerate}
	\item[1)]\textbf{GS1}: The true signals are exactly \(K\)-bandlimited, where $K = 10$. The non-zero GFT coefficients are randomly generated from \(\mathcal{N}(0, 0.5) .\) 
	\item[2)]\textbf{GS2}: The true signals are approximately bandlimited, where the first $K=10$ GFT coefficients are randomly generated from \(\mathcal{N}(0, 0.5)\), and the left $N-K$ GFT coefficients are randomly generated from \(\mathcal{N}\left(0,5\times 10^{-3}\right)\).
	\item[3)]\textbf{GS3}: The true signals are exactly \(K\)-bandlimited, where $K = 40$. The non-zero GFT coefficients are randomly generated from \(\mathcal{N}(0, 0.5) .\) 
\end{enumerate}

We then add i.i.d. Gaussian noise generated from \(\mathcal{N}\left(0,5\times 10^{-3}\right)\). The performance of the proposed method is compared with the following approaches:
\begin{enumerate}
	\item[1)] Random graph sampling with non-uniform probability distribution~\cite{Randomsampling}.
	\item[2)] Deterministic sampling set selection methods, including MinSpec~\cite{DSPsampling}, MaxCutOff~\cite{Efficientsamplingsetselection}, MinFrob~\cite{Uncertaintyprincipleandsampling}, MaxPVol~\cite{Uncertaintyprincipleandsampling}, GFS~\cite{8827302} and Eigen-decomposition-free (Ed\_Free)~\cite{Eigendecomposition-free}, respectively.
\end{enumerate}

We generate $150$ signals from each of the three signal models on each of the graphs, use the sampling sets obtained from all the methods to perform reconstruction, and plot the mean of the mean squared error (MSE) for different sizes of sampling sets. For FAGOD sampling, the shift parameter $\mu$ is set to be $1/(\kappa_0-1)$, where we set $\kappa_0 = 100$ as the condition number constraint. The number of {it Givens} rotations matrices is $J = 6N \log N$. All the competing algorithms are run with the default or the recommended settings in the referred publications. Finally, we estimate the root mean square error (RMSE) between the reconstructed signal $\mathbf{x}^*$ and the original noiseless signal $\mathbf{x}$ as:
\begin{equation}
	\text{RMSE} = \sqrt{\frac{\|\mathbf{x}^*-\mathbf{x}\|_2^2}{N}}.
\end{equation}

\subsection{RMSE vs. Sampling Size}
\begin{figure*}[t!]
	\centering
	\subfigure[]{
		\begin{minipage}[t]{0.3\linewidth}			\includegraphics[width=1\linewidth]{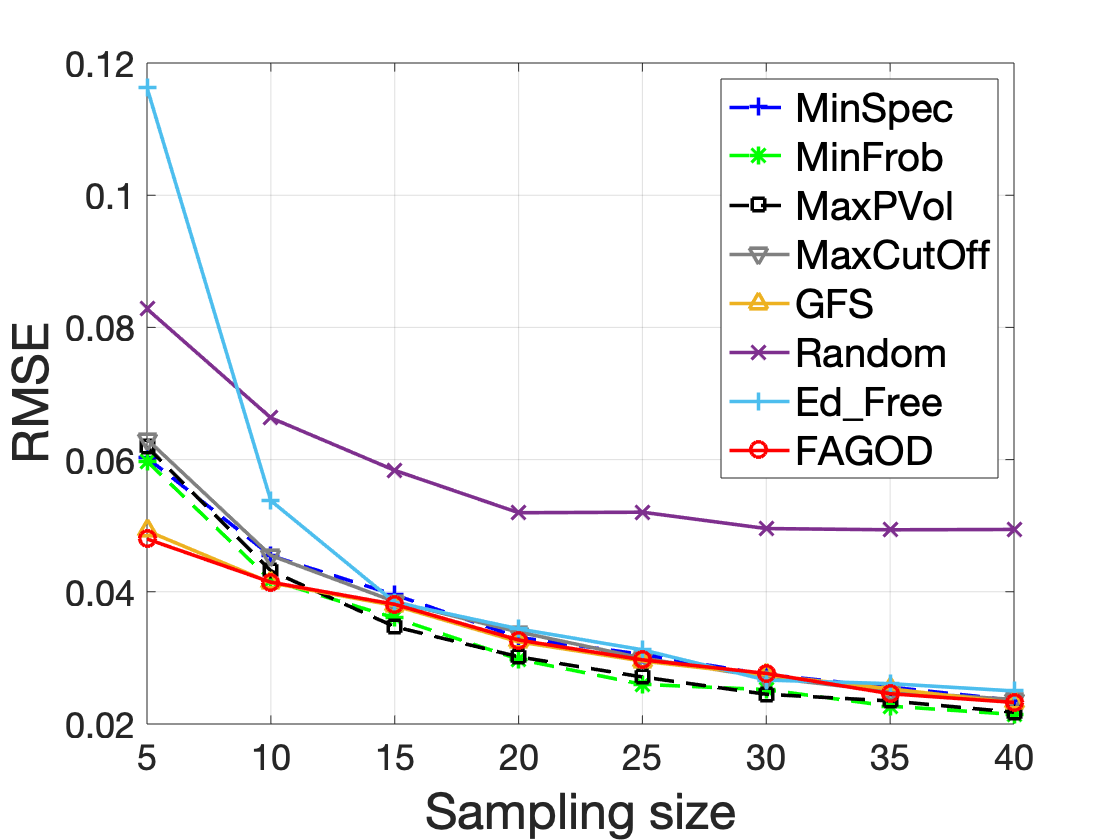}
		\end{minipage}
	}
	\subfigure[]{
		\begin{minipage}[t]{0.3\linewidth}
				\includegraphics[width=1\linewidth]{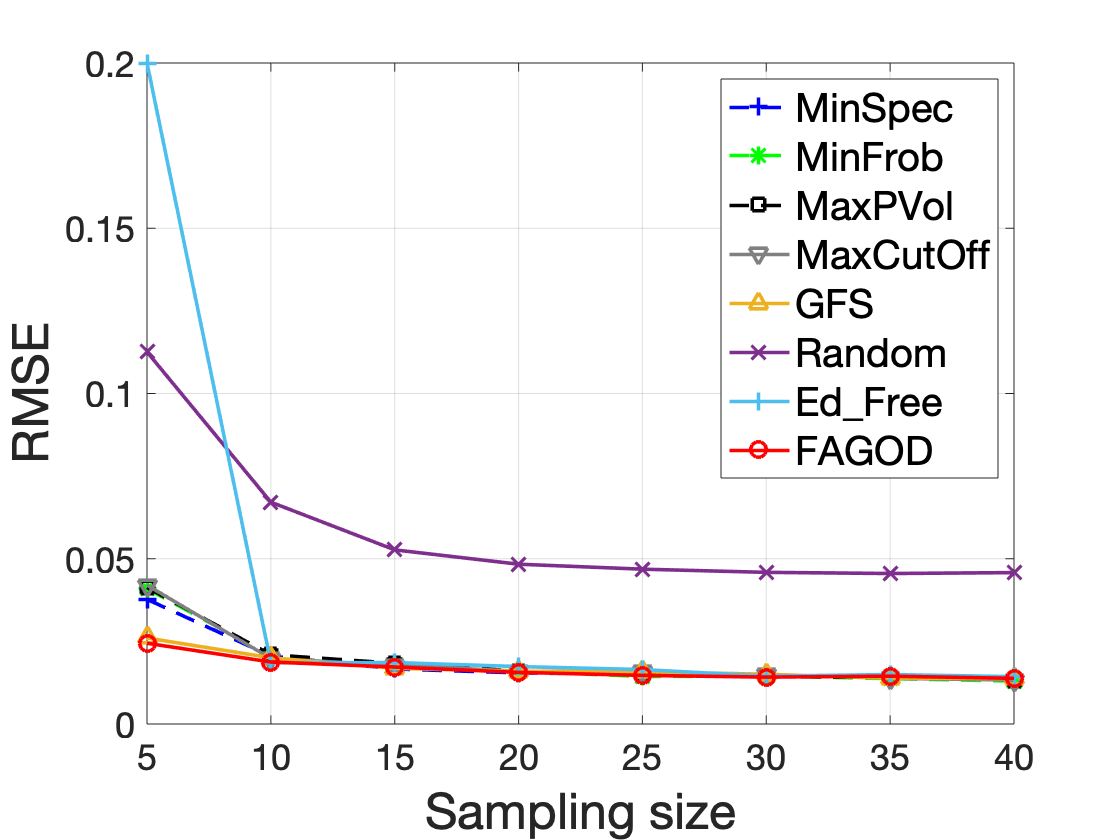}
		\end{minipage}%
	}
	\subfigure[]{
		\begin{minipage}[t]{0.3\linewidth}
			\includegraphics[width=1\linewidth]{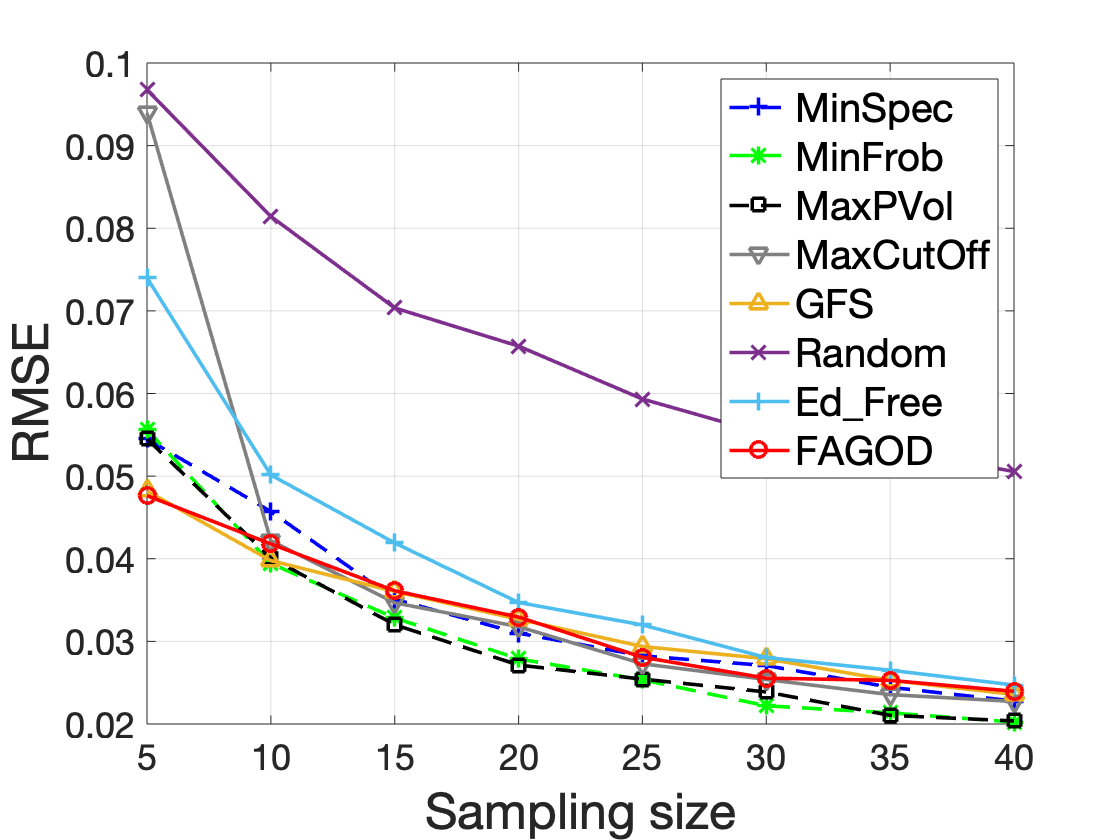}
		\end{minipage}
	}%
	
	\subfigure[]{
		\begin{minipage}[t]{0.3\linewidth}			\includegraphics[width=1\linewidth]{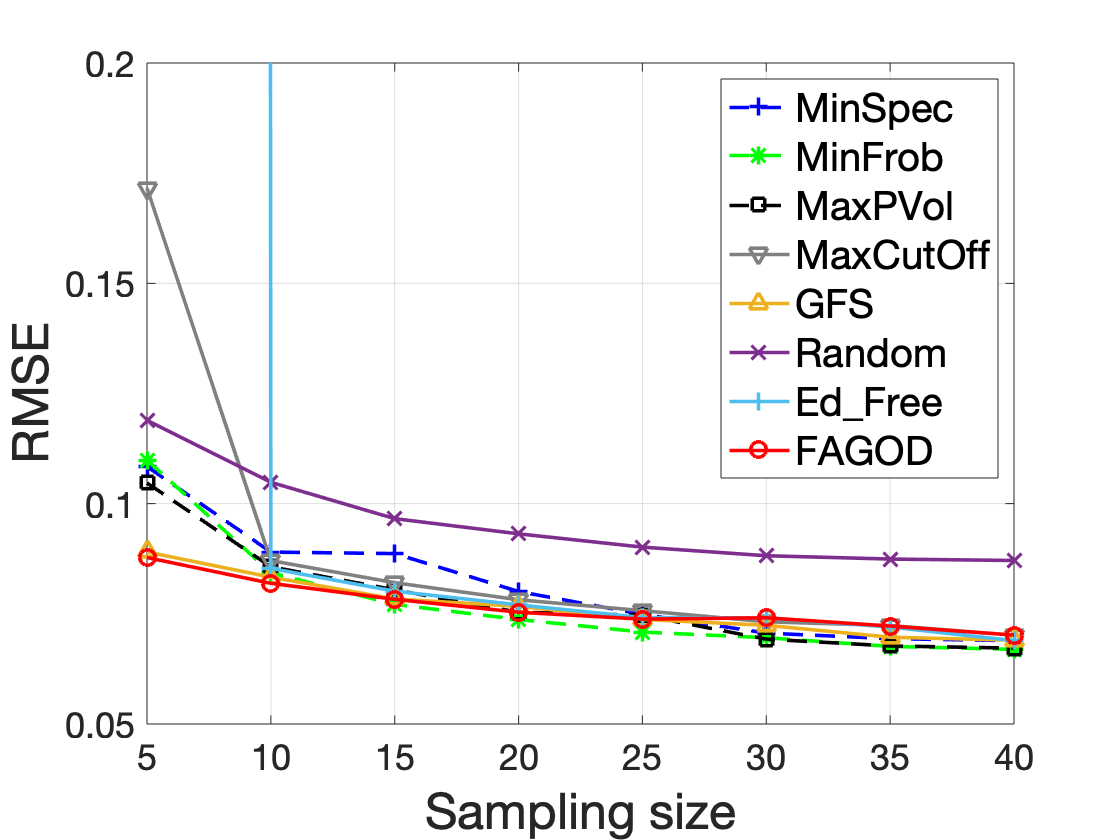}
		\end{minipage}
	}
	\subfigure[]{
		\begin{minipage}[t]{0.3\linewidth}			\includegraphics[width=1\linewidth]{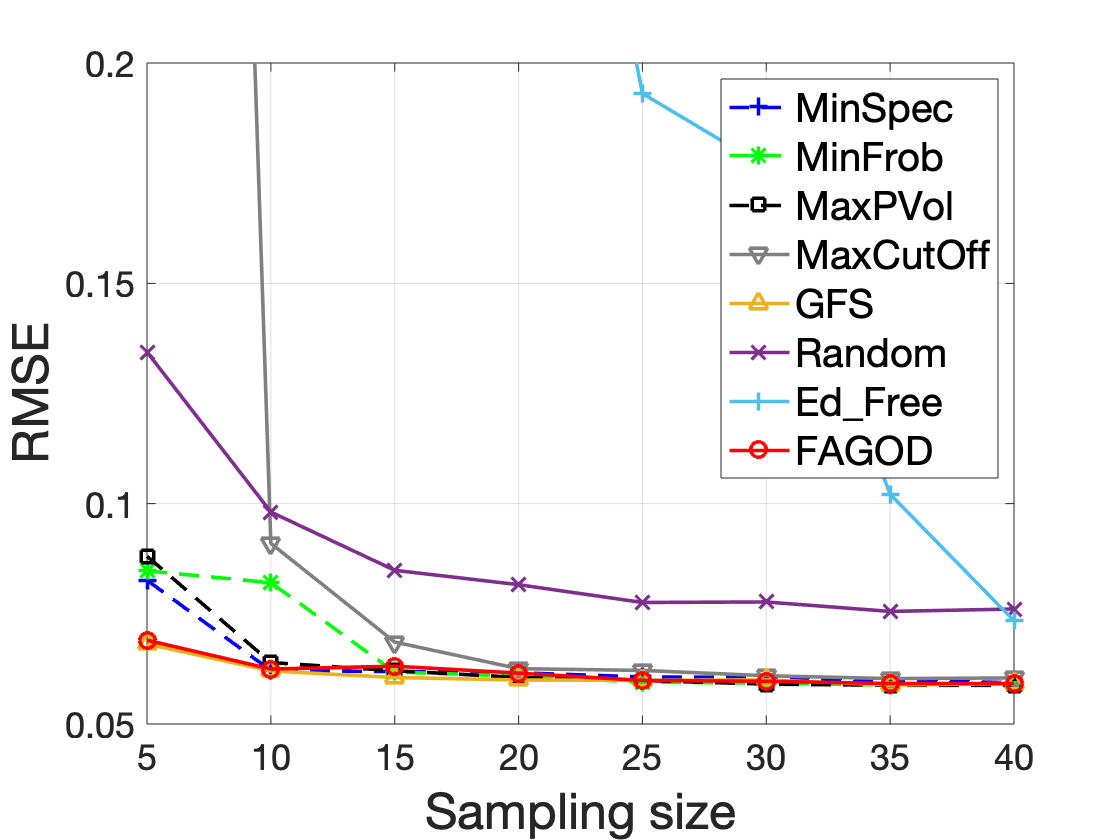}
		\end{minipage}
	}
	\subfigure[]{
		\begin{minipage}[t]{0.3\linewidth}			\includegraphics[width=1\linewidth]{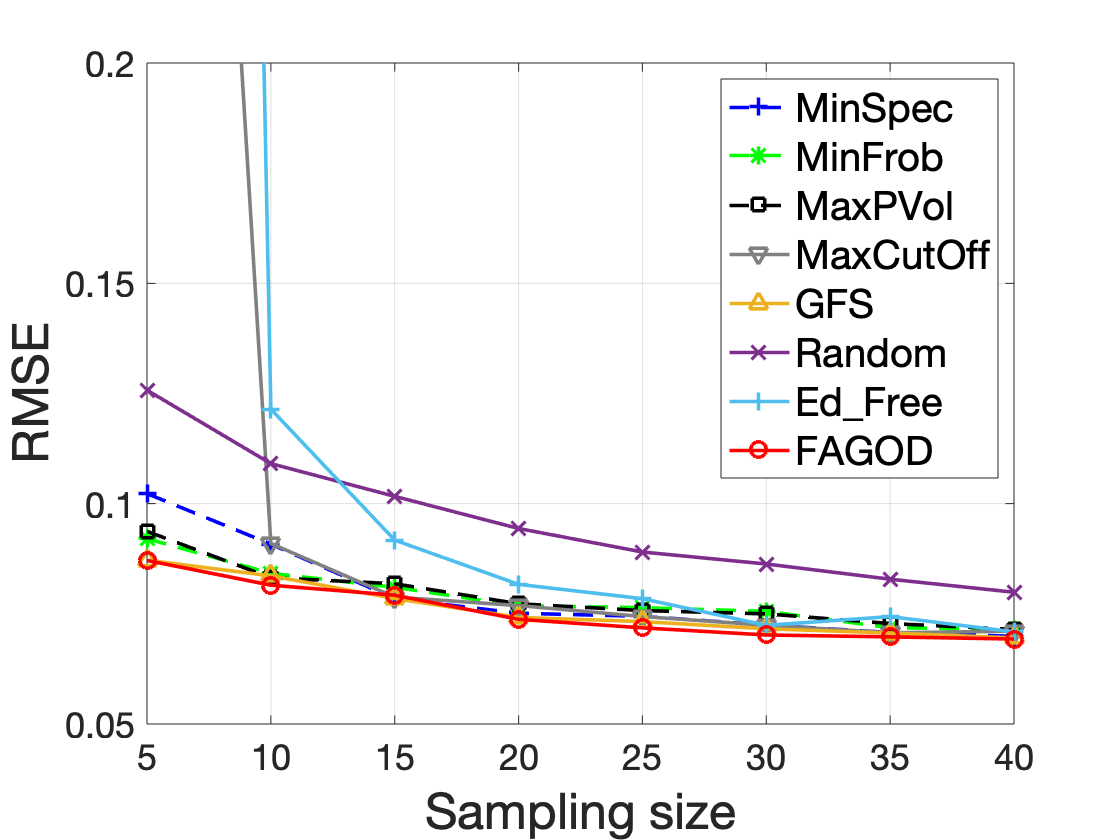}
		\end{minipage}
	}
	\centering
	\caption{Reconstruction results for different signal and graph models. (a) Graph \textbf{G1} and signal model \textbf{GS1}, (b) Graph \textbf{G2} signal model \textbf{GS1}, (c) Graph \textbf{G3} signal model \textbf{GS1}, (d) Graph \textbf{G1} signal model \textbf{GS2}, (e) Graph \textbf{G2} signal model \textbf{GS2}, (f) Graph \textbf{G3} signal model \textbf{GS2}.}
	\label{Fig3}
\end{figure*}
\begin{figure*}[h!]
	\centering
	\subfigure[]{
		\begin{minipage}[t]{0.3\linewidth}			\includegraphics[width=1\linewidth]{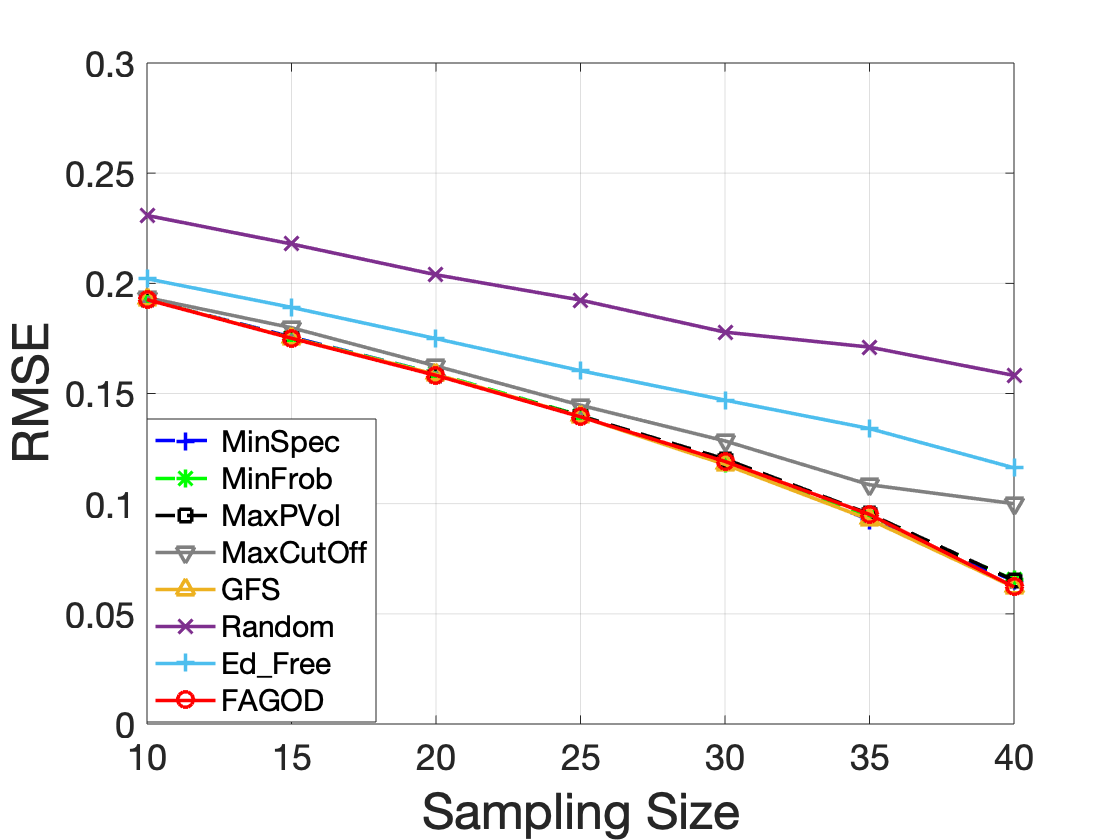}
		\end{minipage}
	}
	\subfigure[]{
		\begin{minipage}[t]{0.3\linewidth}
				\includegraphics[width=1\linewidth]{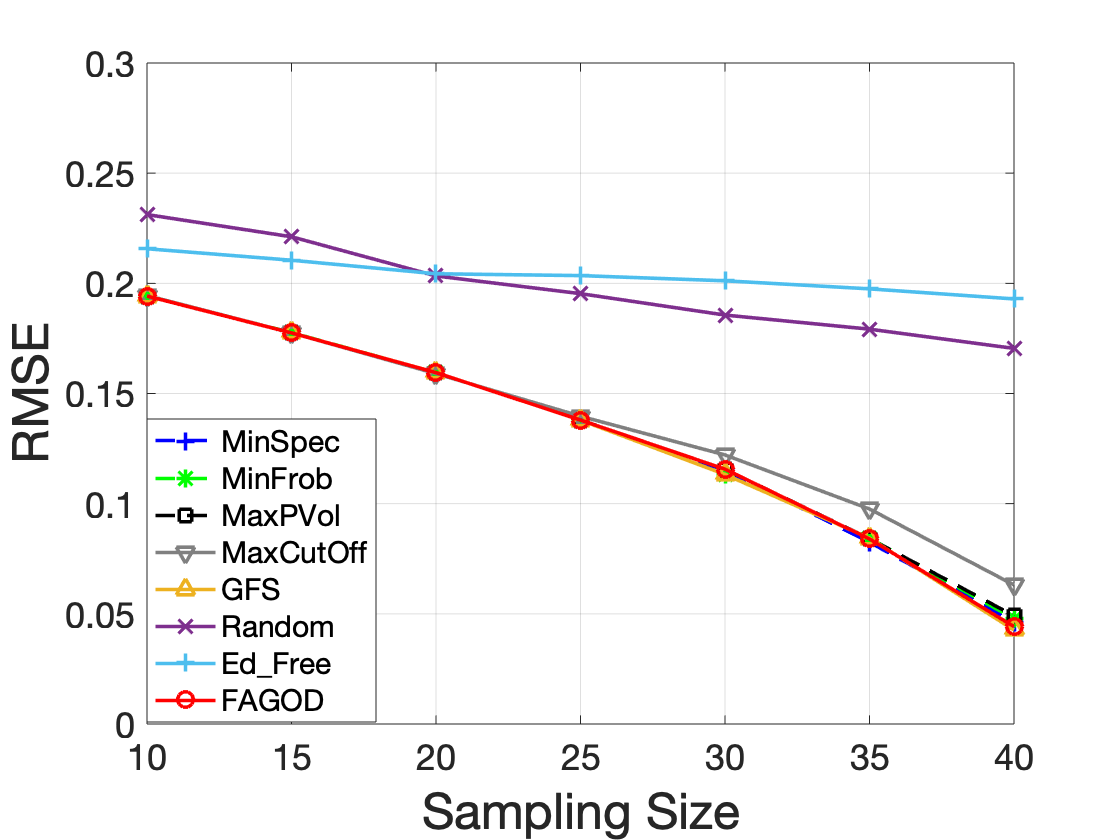}
		\end{minipage}%
	}
	\subfigure[]{
		\begin{minipage}[t]{0.3\linewidth}
			\includegraphics[width=1\linewidth]{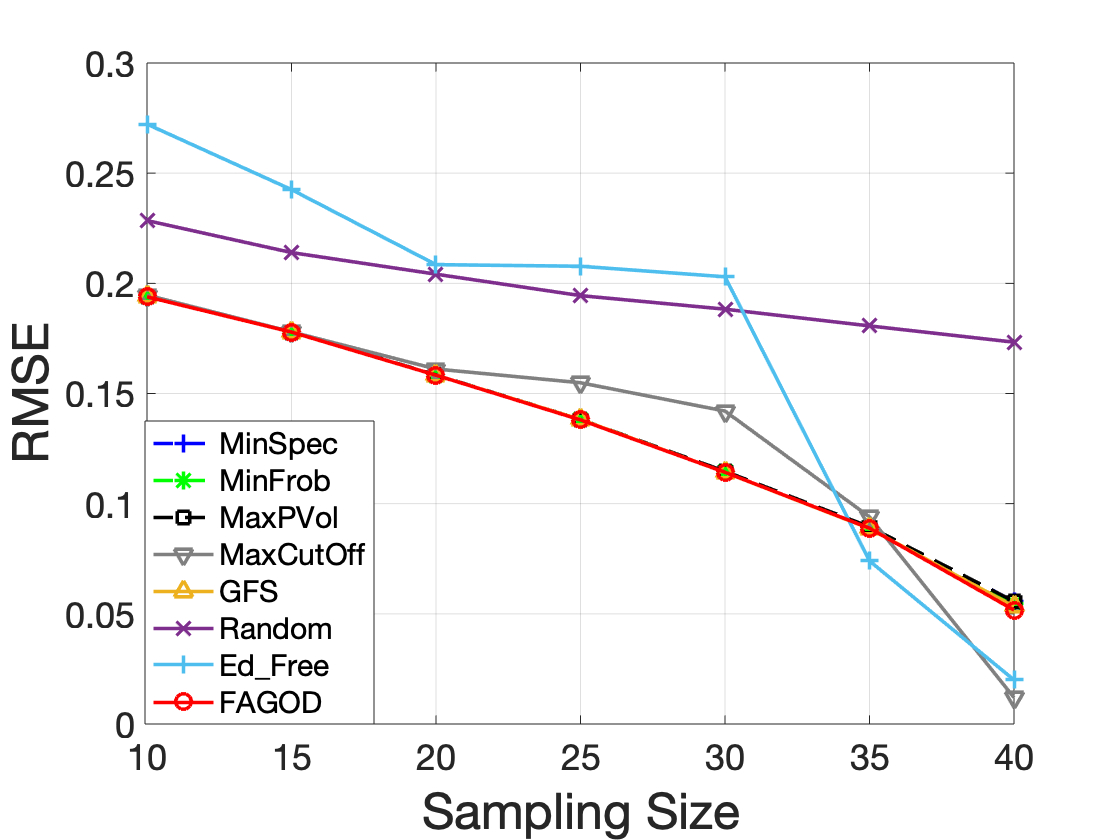}
		\end{minipage}
	}%
	\centering
	\caption{Reconstruction results of signal model \textbf{GS3} for different graph models. (a) Graph \textbf{G1}, (b) Graph \textbf{G2}, (c) Graph \textbf{G3}.}
		\label{Fig5}
\end{figure*}
\begin{figure*}[h!]
	\centering
	\subfigure[]{
		\begin{minipage}[t]{0.3\linewidth}			\includegraphics[width=1\linewidth]{ksensor}
		\end{minipage}
	}
	\subfigure[]{
		\begin{minipage}[t]{0.3\linewidth}
				\includegraphics[width=1\linewidth]{kerdos}
		\end{minipage}%
	}
	\subfigure[]{
		\begin{minipage}[t]{0.3\linewidth}
			\includegraphics[width=1\linewidth]{kcommunity}
		\end{minipage}
	}%
	\centering
	\caption{Reconstruction results of signal model \textbf{GS3} for different graph models. (a) Graph \textbf{G1}, (b) Graph \textbf{G2}, (c) Graph \textbf{G3}.}
		\label{Fig5}
\end{figure*}

\begin{figure*}[t!]
	\centering
	\subfigure[]{
		\begin{minipage}[t]{0.3\linewidth}			\includegraphics[width=1\linewidth]{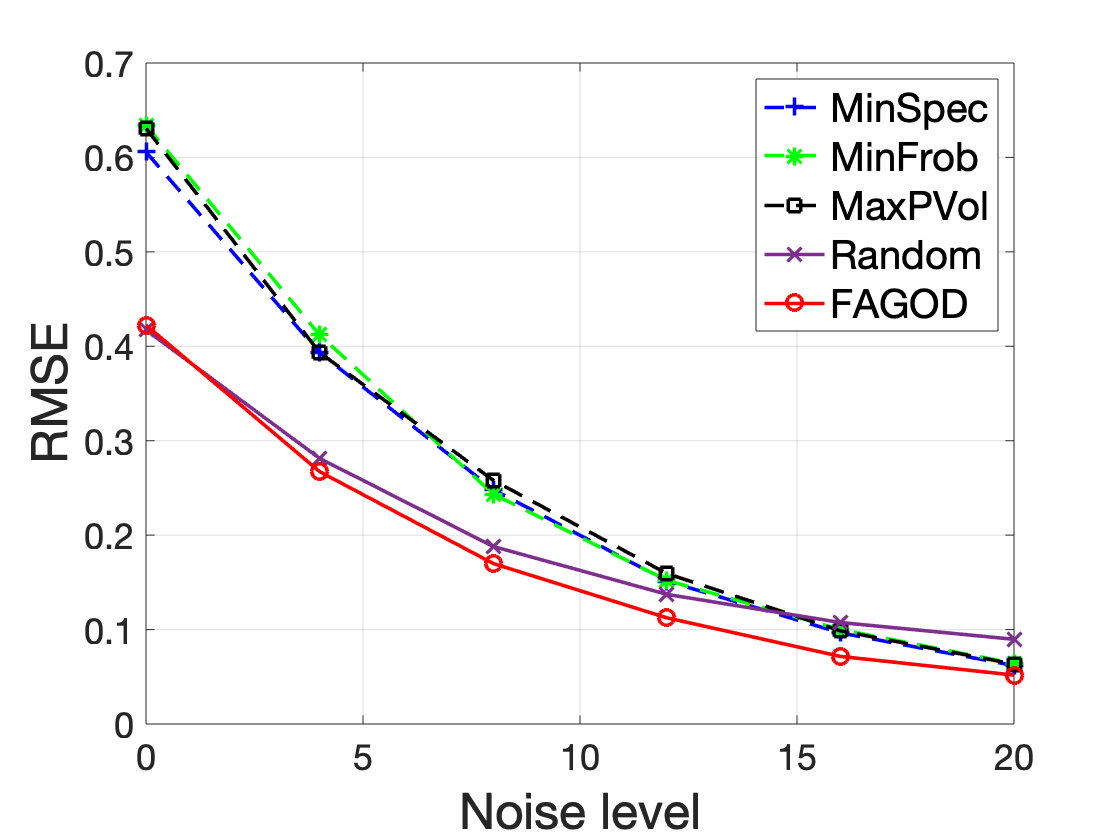}
		\end{minipage}
	}
	\subfigure[]{
		\begin{minipage}[t]{0.3\linewidth}
				\includegraphics[width=1\linewidth]{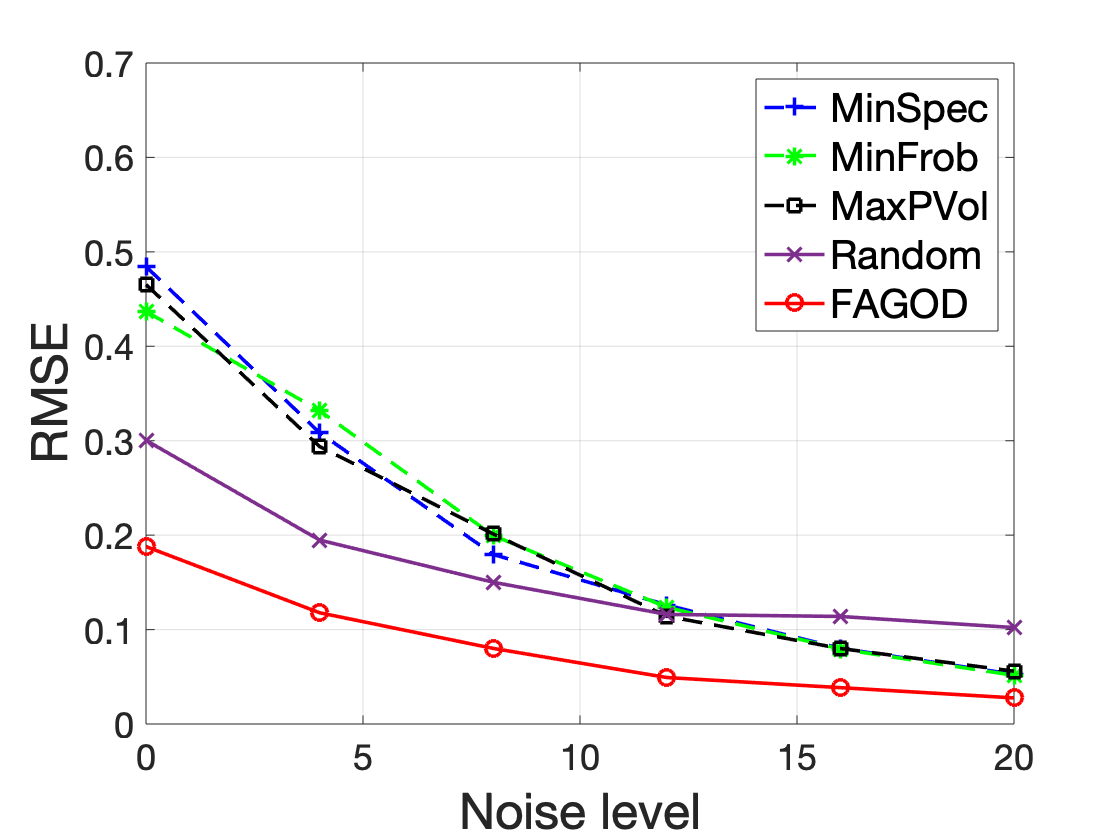}
		\end{minipage}%
	}
	\subfigure[]{
		\begin{minipage}[t]{0.3\linewidth}
			\includegraphics[width=1\linewidth]{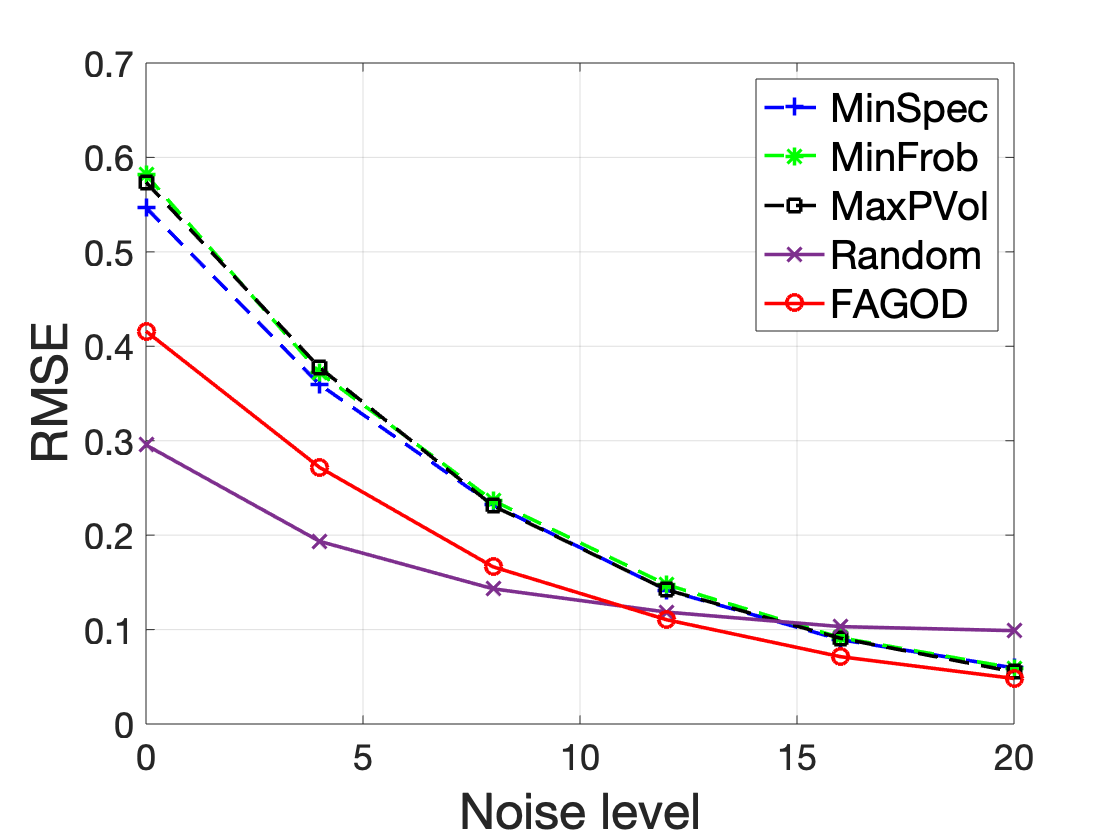}
		\end{minipage}
	}%
	\centering
	\caption{Reconstruction results of signal model \textbf{GS1} for different noise level and graph models. (a) Graph \textbf{G1}, (b) Graph \textbf{G2}, (c) Graph \textbf{G3}.}
	\label{Fig4}
\end{figure*}

\begin{figure*}[t!]
	\centering
	\subfigure[]{
		\begin{minipage}[t]{0.3\linewidth}			\includegraphics[width=1\linewidth]{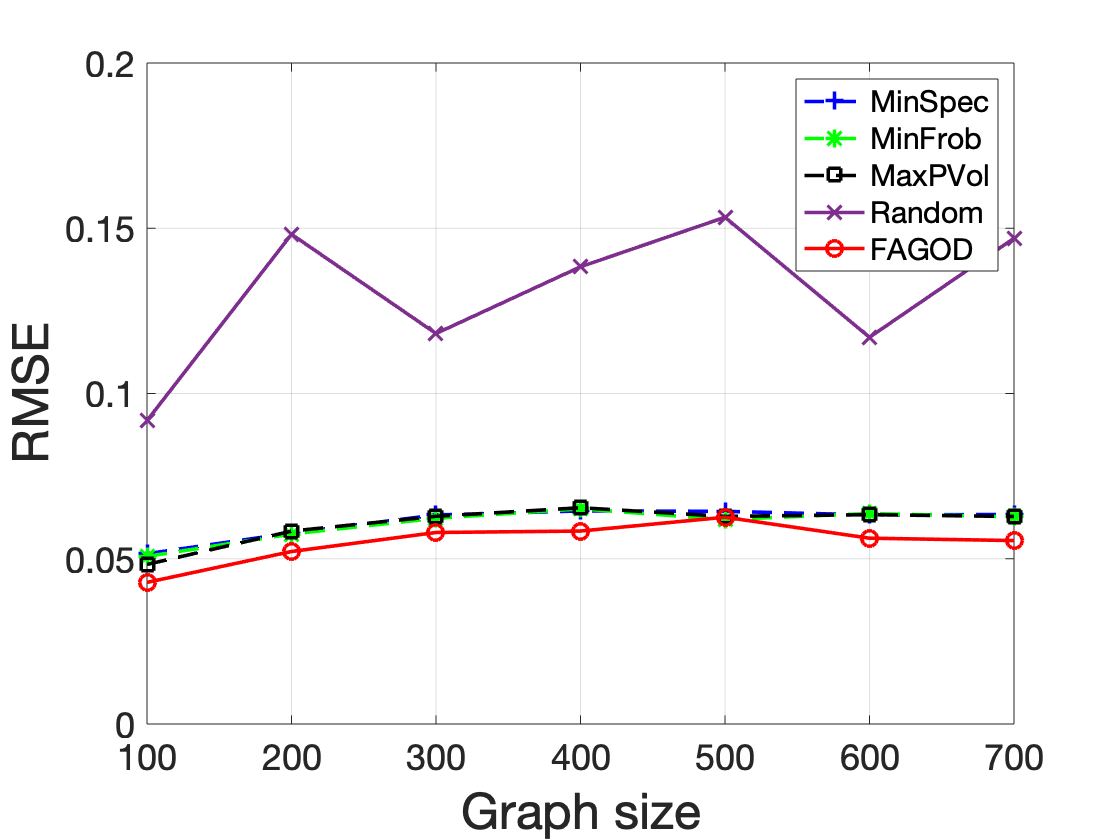}
		\end{minipage}
	}
	\subfigure[]{
		\begin{minipage}[t]{0.3\linewidth}
				\includegraphics[width=1\linewidth]{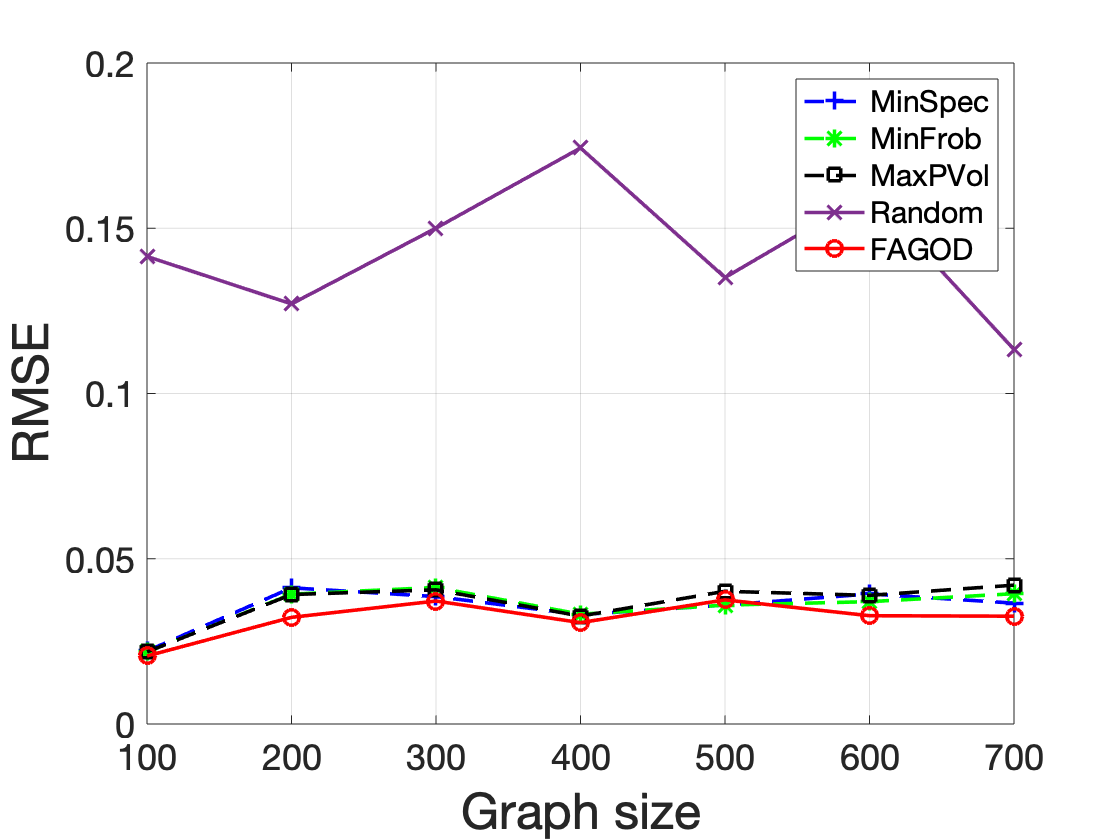}
		\end{minipage}%
	}
	\subfigure[]{
		\begin{minipage}[t]{0.3\linewidth}
			\includegraphics[width=1\linewidth]{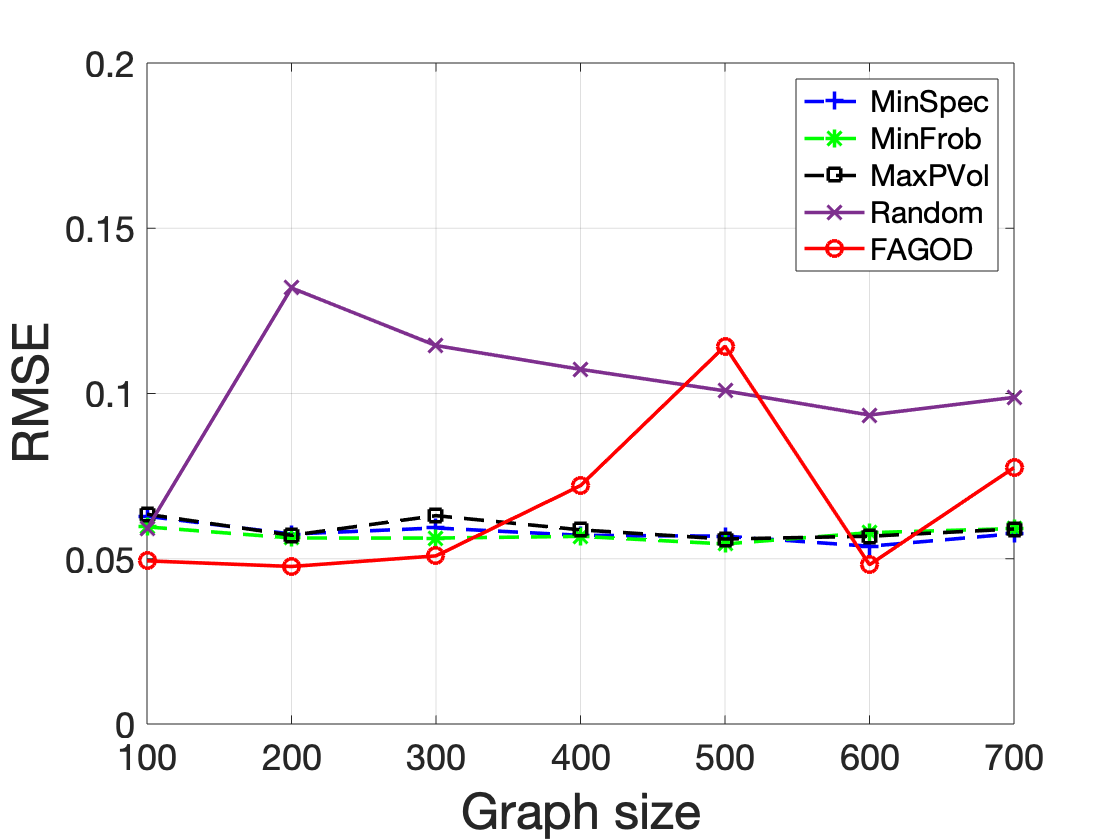}
		\end{minipage}
	}%
	\centering
	\caption{Reconstruction results of signal model \textbf{GS1} for different graph size. (a) Graph \textbf{G1}, (b) Graph \textbf{G2}, (c) Graph \textbf{G3}.}
	\label{Fig6}
\end{figure*}

The average RMSE in terms of sample size is shown in Fig.~\ref{Fig3}. As shown, our proposed FAGOD achieves comparable or smaller RMSE values to the competing deterministic schemes on all three graphs and achieves much better performance than the random sampling scheme Random. Especially when the number of samples is equal to the bandwidth, our method performs the best among the compared methods. This indicates that our proposed algorithm is especially effective with a small sampling size. Although our method is designed for strictly bandlimited graph signals, it performs well for approximated graph signals.

	For eigendecomposition-based algorithms such as MinSpec, MinFrob, MaxPVol, we notice that the performances of these algorithms are satisfactory. However, these eigen-decomposition-based algorithms are computationally too expensive for large graphs. For eigen-decomposition-free algorithms, MaxCutOff and Ed\_Free work well for bandlimited graph signals, but their performance is poor for approximate bandlimited graph signals with small sampling sizes. It has been observed that, in general, the performance of \emph{Random} is not comparable to deterministic graph sampling algorithms.

	We further test all the sampling algorithms with the sampling size to be no smaller than the bandwidth for \textbf{GS3}, since most sampling methods are under the assumption that $|\mathcal{S}| \geq K$. As shown in Fig.~\ref{Fig5}, our method achieves similar performance, as eigen-decomposition-based methods.

\subsection{RMSE vs. Noise Level}

We test the performance of these algorithms for bandlimited signal \textbf{GS1} under different observation noise. We set $K = 10$ and sampling size $10$. All noises are generated from Gaussian distribution with mean $0$ and different variances $\sigma^2$. We range the signal-to-noise ratio (SNR) from $0$ to $20$, which is defined as
\begin{equation}
	\text{SNR} = 10\log\left(\frac{5\times10^{-1}}{\sigma^2}\right).
\end{equation}
Sampling size equals bandwidth for all cases. Fig.~\ref{Fig4} shows that FAGOD performs best even with a small SNR. We also observe that when the noise level is large, Random performs better than eigendecomposition-based algorithms. Its performance is guaranteed due to the restricted isometry property of the measurement matrix proved in~\cite{Randomsampling}.

\subsection{RMSE vs. Graph Size}

We set $K = N/20$ and let sampling size equals bandwidth $K$. For each graph size, we generate one corresponding graph model and $150$ graph signals. Fig.~\ref{Fig6} shows the results for different graph sizes on three graphs. It can be found that our method almost always achieves the best reconstruction effect.

\section{Conclusion}\label{sec7}
In this paper, we have proposed an efficient subset selection method based on the optimal experimental design. Our proposed method combines sampling with G-optimal, and does not require any eigenvalue decomposition of the graph Laplacian matrix. We have also  analyzed for the submodular property of the objective function and provided an estimate on the lower bound of the parameter $\alpha$. Through experiments, we have demonstrated that the reconstruction error of our method has competitive performance compared to the state-of-the-art ones in various scenarios.

\section*{Appendix}
\section{Proof of Proposition 1}\label{prolemma1}
Proposition 1 is obvious considering the following lemma.
\begin{lemma}\label{lemma1}
	Suppose $\mathbf{C}$ is a $k\times k $ positive definite matrix, with $\mathbf{\delta}(\mathbf{C}) = [c_{11},\cdots,c_{kk}]$ of diagonal elements and vector $\boldsymbol{\lambda}(\mathbf{C}) = [\lambda_1,\cdots,\lambda_k]$ of eigenvalues, we have
	\begin{equation}
		\underset{i}{\max}~c_{ii} \geq \left(\prod_{i=1}^kc_{ii}\right)^{\frac{1}{k}} \geq \left(\prod_{i=1}^k\lambda_i\right)^{\frac{1}{k}}.
	\end{equation}
\end{lemma}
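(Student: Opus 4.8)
The plan is to treat the two inequalities separately, since they are of quite different natures. The left inequality $\max_i c_{ii} \geq \left(\prod_{i=1}^k c_{ii}\right)^{1/k}$ is the elementary fact that the maximum of a finite collection of positive numbers dominates their geometric mean. First I would note that because $\mathbf{C}$ is positive definite, every diagonal entry is strictly positive (indeed $c_{ii} = \mathbf{e}_i^\top \mathbf{C}\, \mathbf{e}_i > 0$), so the geometric mean is well defined. Then, writing $c_{\max} = \max_i c_{ii}$, each factor obeys $c_{ii} \leq c_{\max}$, hence $\prod_{i=1}^k c_{ii} \leq c_{\max}^{\,k}$, and taking $k$-th roots yields the claim.

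The right inequality $\left(\prod_{i=1}^k c_{ii}\right)^{1/k} \geq \left(\prod_{i=1}^k \lambda_i\right)^{1/k}$ is, after raising both (positive) sides to the $k$-th power, equivalent to $\prod_{i=1}^k c_{ii} \geq \prod_{i=1}^k \lambda_i = \left|\mathbf{C}\right|$, which is precisely Hadamard's inequality for positive definite matrices. This is the substantive step. To establish it, I would normalize $\mathbf{C}$ by its diagonal: let $\mathbf{D} = \operatorname{diag}(c_{11},\dots,c_{kk})$ and form $\mathbf{R} = \mathbf{D}^{-1/2}\mathbf{C}\,\mathbf{D}^{-1/2}$, which is again positive definite and has all diagonal entries equal to $1$, so $\operatorname{Tr}(\mathbf{R}) = k$.

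Denoting the (positive) eigenvalues of $\mathbf{R}$ by $\mu_1,\dots,\mu_k$, the AM-GM inequality applied to them gives $\prod_i \mu_i \leq \left(\tfrac{1}{k}\sum_i \mu_i\right)^k = \left(\tfrac{1}{k}\operatorname{Tr}(\mathbf{R})\right)^k = 1$. Since $\prod_i \mu_i = \left|\mathbf{R}\right| = \left|\mathbf{C}\right| \big/ \prod_i c_{ii}$, this rearranges to $\left|\mathbf{C}\right| \leq \prod_i c_{ii}$, which is exactly the desired inequality. Chaining the two bounds then produces the full statement.

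The main obstacle is the middle (Hadamard) inequality; everything else is routine. Once the normalization reduces it to AM-GM on the eigenvalues of a unit-diagonal correlation matrix, the argument closes cleanly, and one could alternatively just cite Hadamard's inequality directly. With the lemma in hand, Proposition~\ref{proposition1} follows by applying it to $\mathbf{C} = \left(\mathbf{V}_{\mathcal{S}K}^\top \mathbf{V}_{\mathcal{S}K} + \mu\mathbf{I}\right)^{-1}$ and taking logarithms, so the lemma is exactly the ingredient needed there.
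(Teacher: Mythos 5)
Your proof is correct, but it reaches the substantive (right-hand) inequality by a genuinely different route than the paper. Both arguments handle the left inequality identically (the maximum of positive numbers dominates their geometric mean, with positivity of the $c_{ii}$ coming from positive definiteness), and both reduce the rest to $\prod_i c_{ii} \geq \prod_i \lambda_i = |\mathbf{C}|$, i.e., Hadamard's inequality. The paper proves this via a majorization-style argument: from the eigendecomposition $\mathbf{C} = \sum_j \lambda_j \mathbf{u}_j\mathbf{u}_j^\top$ it writes each diagonal entry as a convex combination $c_{ii} = \sum_j t_{ij}\lambda_j$ with doubly stochastic weights $t_{ij} = u_{ji}^2$, applies concavity of $\log$ (Jensen) row by row, and sums over $i$, using orthogonality of $\mathbf{U}$ (unit column sums of $\mathbf{T}$) to collapse the double sum to $\sum_j \log \lambda_j$. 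You instead normalize by the diagonal, forming $\mathbf{R} = \mathbf{D}^{-1/2}\mathbf{C}\mathbf{D}^{-1/2}$ with unit diagonal and $\operatorname{Tr}(\mathbf{R}) = k$, and apply AM--GM to the eigenvalues of $\mathbf{R}$, then unwind the determinants. Your version is somewhat more elementary — only AM--GM and multiplicativity of the determinant, no Jensen step or doubly stochastic bookkeeping — and it makes the identification with Hadamard's inequality explicit, so the whole step could alternatively be discharged by citation; the paper's version exhibits the stronger structural fact (the Schur--Horn-type relation between diagonal and spectrum), delivering the log-sum inequality $\sum_i \log c_{ii} \geq \sum_i \log \lambda_i$ directly. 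Either proof fully supports the way the lemma is used in Proposition~\ref{proposition1}, and your closing remark about applying it to $\mathbf{C} = \left(\mathbf{V}_{\mathcal{S}K}^\top \mathbf{V}_{\mathcal{S}K} + \mu\mathbf{I}\right)^{-1}$ matches the paper's usage.
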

\begin{proof}[Proof of Lemma~\ref{lemma1}]
	We write $\mathbf{c} = \delta(\mathbf{C}), \boldsymbol{\lambda} = \lambda(\mathbf{C})$, for short. We choose an eigenvalue decomposition $\mathbf{C} = \mathbf{U}^\top \mathbf{\Lambda} \mathbf{U}=\sum_i \lambda_i \mathbf{u}_i \mathbf{u}_i^\top$, where $\mathbf{u}_i \in \mathbb{R}^{k\times1}$ is the eigenvector. Define a matrix $\mathbf{T} \in \mathbb{R}^{k \times k}$ with entry $t_{ij} = u_{ji}^2 \geq 0$. We have
	\begin{equation}
	c_{i i}=\mathbf{e}_{i}^{\top} \left(\sum_{j = 1}^k \lambda_{j} \mathbf{u}_j \mathbf{u}_j^\top\right) \mathbf{e}_{i}=\sum_{j = 1}^k u_{j i}^{2} \lambda_{j}=\sum_{j = 1}^k t_{i j} \lambda_{j}.
	\end{equation}
	Notice $g(x) = \log x$ is a concave function, so
	\begin{equation}
	\log c_{ii} = \log\left(\sum_{j = 1}^k t_{i j} \lambda_{j}\right) \geq \sum_{j = 1}^k t_{i j}\log \lambda_{j}.
	\end{equation}
	Summation over $i$ gives
	\begin{equation}
	\sum_{i=1}^k \log c_{ii} \geq \sum_{i=1}^k\sum_{j = 1}^k t_{i j}\log \lambda_{j} \overset{(\text{a})}{=} \sum_{j = 1}^k \log \lambda_i.
	\end{equation}
	Here (a) is because the orthogonality of $\mathbf{U}$. Using the property of $\log$ function, we have
	\begin{equation}
		\underset{i}{\max}~c_{ii} \geq \left(\prod_{i=1}^kc_{ii}\right)^{\frac{1}{k}} \geq \left(\prod_{i=1}^k\lambda_i\right)^{\frac{1}{k}}.
	\end{equation}
	Proof finished.
\end{proof}
\begin{proof}[Proof of Proposition 1]
	Lemma~\ref{lemma1} shows that if we attempt to minimize the left side of the inequality, we actually minimize the upper bound of the right side. It is obvious that $\mathbf{V}_{\mathcal{S} K}^{\top} \mathbf{V}_{\mathcal{S} K}$ is semi-positive definite for any sampling set $\mathcal{S}\subset\mathcal{V}$ and $|\mathcal{S}|\geq K$. Thus $\mathbf{V}_{\mathcal{S} K}^{\top} \mathbf{V}_{\mathcal{S} K}+\mu\mathbf{I}$ is always positive definite for $\forall \mu >0$, which indicates that matrix $(\mathbf{V}_{\mathcal{S} K}^{\top} \mathbf{V}_{\mathcal{S} K}+\mu\mathbf{I})^{-1}$ satisfies the conditions in Proposition~\ref{proposition1}. That is to say, minimizing the objective function~\eqref{11} of GOD is equivalent to minimizing an upper bound of the objective function of D-optimal.
\end{proof}

\section{Proof of Theorem~\ref{mainTh}}\label{protheorem}
We first prove that the following lemma holds:
\begin{lemma}\label{beforetheo}
For $\forall$ $\mathbf{A}, \mathbf{B}\in \mathbb{R}^{N\times N}$, we have
	\begin{equation}\label{37}
		-\operatorname{d}(\mathbf{B}-\mathbf{A}) \leq \operatorname{d}(\mathbf{A}) - \operatorname{d}(\mathbf{B}) \leq \operatorname{d}(\mathbf{A}- \mathbf{B}).
	\end{equation}
where $\operatorname{d}(\mathbf{A}) = \max\operatorname{diag}(\mathbf{A})$ means calculating the maximum diagonal element of of the matrix $\mathbf{A}$. 
\end{lemma}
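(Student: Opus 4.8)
The plan is to reduce the matrix statement to an elementary property of the scalar $\max$ operation, since $\operatorname{d}$ sees only the diagonal entries. Writing $a_i = \mathbf{A}_{ii}$ and $b_i = \mathbf{B}_{ii}$, the map $\mathbf{A}\mapsto (\mathbf{A}_{11},\dots,\mathbf{A}_{NN})$ is linear, so $\operatorname{d}(\mathbf{A}-\mathbf{B}) = \max_i (a_i - b_i)$ and $\operatorname{d}(\mathbf{B}-\mathbf{A}) = \max_i(b_i - a_i)$, while $\operatorname{d}(\mathbf{A}) = \max_i a_i$ and $\operatorname{d}(\mathbf{B}) = \max_i b_i$. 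Thus it suffices to prove the two inequalities for these diagonal vectors, and positive-definiteness or any other structure of $\mathbf{A},\mathbf{B}$ is irrelevant.

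First I would establish the subadditivity of the maximum: for any reals $x_i,y_i$, the bound $x_i + y_i \le \max_j x_j + \max_k y_k$ holds for every index $i$, and taking the maximum over $i$ gives $\max_i(x_i+y_i)\le \max_i x_i + \max_i y_i$. Phrased in terms of $\operatorname{d}$, this reads $\operatorname{d}(\mathbf{X}+\mathbf{Y}) \le \operatorname{d}(\mathbf{X}) + \operatorname{d}(\mathbf{Y})$ for arbitrary $\mathbf{X},\mathbf{Y}\in\mathbb{R}^{N\times N}$.

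Next I would apply this subadditivity twice. Setting $\mathbf{X}=\mathbf{B}$ and $\mathbf{Y}=\mathbf{A}-\mathbf{B}$ yields $\operatorname{d}(\mathbf{A}) \le \operatorname{d}(\mathbf{B}) + \operatorname{d}(\mathbf{A}-\mathbf{B})$, which rearranges into the upper bound $\operatorname{d}(\mathbf{A}) - \operatorname{d}(\mathbf{B}) \le \operatorname{d}(\mathbf{A}-\mathbf{B})$. Exchanging the roles of $\mathbf{A}$ and $\mathbf{B}$ gives $\operatorname{d}(\mathbf{B}) - \operatorname{d}(\mathbf{A}) \le \operatorname{d}(\mathbf{B}-\mathbf{A})$, i.e.\ $-\operatorname{d}(\mathbf{B}-\mathbf{A}) \le \operatorname{d}(\mathbf{A}) - \operatorname{d}(\mathbf{B})$, the lower bound. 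Chaining the two estimates produces $-\operatorname{d}(\mathbf{B}-\mathbf{A}) \le \operatorname{d}(\mathbf{A}) - \operatorname{d}(\mathbf{B}) \le \operatorname{d}(\mathbf{A}-\mathbf{B})$, as claimed.

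There is no serious obstacle here, as the entire content is the subadditivity of $\max$. The one point worth stating carefully is that the index achieving the largest diagonal entry of $\mathbf{A}$ need not coincide with the maximizing index of $\mathbf{B}$ or of $\mathbf{A}-\mathbf{B}$, so one cannot argue entrywise; the inequality is genuinely a consequence of taking separate maxima, and it is strict in general. I would also emphasize that the lemma holds for all real matrices precisely because $\operatorname{d}$ is a function of the diagonal alone, and this generality is what makes it directly applicable to the difference quotients $\operatorname{d}(\mathcal{A}\cup\{u\})-\operatorname{d}(\mathcal{A})$ that arise in the $\alpha$-supermodularity argument for Theorem~\ref{mainTh}.
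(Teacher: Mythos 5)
Your proof is correct, and while it shares the paper's overall strategy --- reduce the matrix claim to a statement about maxima of the diagonal sequences, prove the right-hand inequality, then get the left-hand one by re-applying it with the roles of $\mathbf{A}$ and $\mathbf{B}$ exchanged --- the core step is argued differently, and your version is the more solid one. You establish the upper bound via subadditivity of the maximum: writing $\mathbf{A}=\mathbf{B}+(\mathbf{A}-\mathbf{B})$ and using $\max_i(x_i+y_i)\le\max_i x_i+\max_i y_i$. The paper instead reasons about which indices attain the maxima: with $p_s=\max(\mathbf{p})$ and $q_t=\max(\mathbf{q})$, it asserts that $\max(\mathbf{p}-\mathbf{q})$ must equal $p_s-q_i$ for some $i$ or $p_j-q_t$ for some $j$, and then handles only the first case ``without loss of generality.'' That dichotomy is not valid in general --- for $\mathbf{p}=(3,0,2)$, $\mathbf{q}=(3,2,0)$ one has $\max(\mathbf{p}-\mathbf{q})=2$, which is of neither form --- and the second case would not follow by the same manipulation anyway, so the paper's case analysis is at best incomplete (the intended conclusion is of course still true, since $\max(\mathbf{p}-\mathbf{q})\ge p_s-q_s\ge p_s-q_t$). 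Your subadditivity argument never names a maximizing index; indeed you explicitly flag that the maximizers of $\mathbf{A}$, $\mathbf{B}$, and $\mathbf{A}-\mathbf{B}$ need not coincide, which is precisely the pitfall the paper's write-up runs into. So your route buys robustness and brevity at no cost in generality, and your observation that the lemma needs no structure beyond the diagonal (no positive definiteness) matches how it is actually invoked in the proof of Theorem~\ref{mainTh}.
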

\begin{proof}[Proof of Lemma~\ref{beforetheo}]
	Instead of proving~\eqref{37}, we prove another description of Lemma~\ref{beforetheo}:\\
	
	\emph{For any two sequences $\mathbf{p} = [p_1, p_2, \dots, p_n]$ and $\mathbf{q} = [q_1, q_2, \dots, q_n]$, the following inequality holds
	\begin{equation}\label{500}
	 -\max(\mathbf{q}-\mathbf{p})	\leq\max(\mathbf{p})-\max(\mathbf{q})  \leq \max(\mathbf{p}-\mathbf{q}),
	\end{equation}
	where $\max()$ denotes finding the maximum element in the sequence.}\\
	
	Suppose $p_s = \max(\mathbf{p}), q_t = \max(\mathbf{q})$. First, it is obvious that
	\begin{equation}
		\begin{split}
			\max(\mathbf{p}-\mathbf{q}) = p_s-q_i \text{ or } \max(\mathbf{p}-\mathbf{q}) = p_j-q_t,
		\end{split}
	\end{equation}	
	where $1\leq i,j\leq n , i\neq t, j\neq s.$ Without loss of generality, suppose
	\begin{equation}
		\max(\mathbf{p}-\mathbf{q}) = p_s-q_i.
	\end{equation}
	Thus,
	\begin{equation}\label{53}
		\max(\mathbf{p}-\mathbf{q}) = p_s-q_i \geq p_s-q_t = \max(\mathbf{p})-\max(\mathbf{q}).
	\end{equation}
	According to~\eqref{53}, 
		\begin{align}
			-\max(\mathbf{q}-\mathbf{p}) &=\max(\mathbf{q})-\max(\mathbf{q}-\mathbf{p})-\max(\mathbf{q})\nonumber\\
			&\leq \max(\mathbf{q}-\mathbf{q}+\mathbf{p})-\max(\mathbf{q})\nonumber\\
			&=\max(\mathbf{p})-\max(\mathbf{q}).
		\end{align}
	Proof finished.
\end{proof}

Based on Lemma~\ref{beforetheo}, we now give the proof of Theorem~\ref{mainTh}.
\begin{proof}[Proof of Theorem~\ref{mainTh}]
	We define \(Z(\mathcal{S})  = \sum_{i \in \mathcal{S}} \mathbf{v}_{i:}^\top \mathbf{v}_{i:}  +  \mu \mathbf{I}\) and  \(Z(\mathcal{S}\cup\{j\})  = Z(\mathcal{S}) + \mathbf{v}_{j:}^\top \mathbf{v}_{j:}\). The objective function~\eqref{21} can be written as
		\begin{align}
			\mathcal{S}&=\underset{|\mathcal{S}|=M}{\arg \min }~f\left(\left(Z(\mathcal{S})\right)^{-1}\right)\nonumber\\
				& = \underset{|\mathcal{S}|=M}{\arg \min }~g\left(\mathcal{S}\right),
		\end{align}
	where $f(\mathbf{A}) = \max\operatorname{diag}(\mathbf{A})$ means calculating the maximum diagonal element of of the matrix $\mathbf{A}$. Our proof contains two parts.
	
	\begin{enumerate}
		\item[i.] \emph{Monotone decreasing}\\		
		We first introduce the following lemma, which is known as \emph{Sherman-Morrison formula}~\cite{bach2013learning}:
		\begin{lemma}
			Suppose \(\mathbf{A} \in \mathbb{R}^{N \times N}\) is an invertible square matrix and \(\mathbf{u}, \mathbf{v} \in \mathbb{R}^{N}\) are column vectors. If \(\mathbf{A}+\mathbf{u v}^{\top}\) is invertible,then its inverse is given by
	\begin{equation}
		\left(\mathbf{A}+\mathbf{u v}^{\top}\right)^{-1}=\mathbf{A}^{-1}-\frac{\mathbf{A}^{-1} \mathbf{u v}^{\top} \mathbf{A}^{-1}}{1+\mathbf{v}^{\top} \mathbf{A}^{-1} \mathbf{u}}.
	\end{equation}
		\end{lemma}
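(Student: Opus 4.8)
The plan is to establish the formula by \emph{direct verification}: I would propose the right-hand side as a candidate for the inverse and check, by matrix multiplication, that it acts as an inverse of $\mathbf{A}+\mathbf{u}\mathbf{v}^{\top}$. Because $\mathbf{A}+\mathbf{u}\mathbf{v}^{\top}$ is a square matrix and is assumed invertible, it suffices to verify that the candidate is a \emph{right} inverse; any right inverse of an invertible square matrix is automatically its unique two-sided inverse. Concretely, I would abbreviate the scalar $\beta = 1 + \mathbf{v}^{\top}\mathbf{A}^{-1}\mathbf{u}$ and set
\begin{equation}
\mathbf{M} = \mathbf{A}^{-1} - \frac{\mathbf{A}^{-1}\mathbf{u}\mathbf{v}^{\top}\mathbf{A}^{-1}}{\beta},
\end{equation}
so that the goal becomes showing $(\mathbf{A}+\mathbf{u}\mathbf{v}^{\top})\mathbf{M} = \mathbf{I}$.

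Before expanding this product I would dispose of the only well-definedness issue, namely that $\beta \neq 0$ so that $\mathbf{M}$ makes sense. This follows from the matrix determinant lemma $\det(\mathbf{A}+\mathbf{u}\mathbf{v}^{\top}) = (\det\mathbf{A})\,(1+\mathbf{v}^{\top}\mathbf{A}^{-1}\mathbf{u}) = (\det\mathbf{A})\,\beta$: since both $\mathbf{A}$ and $\mathbf{A}+\mathbf{u}\mathbf{v}^{\top}$ are invertible by hypothesis, both determinants are nonzero, which forces $\beta \neq 0$. I would then expand $(\mathbf{A}+\mathbf{u}\mathbf{v}^{\top})\mathbf{M}$ into four terms. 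The term $\mathbf{A}\mathbf{A}^{-1}$ gives $\mathbf{I}$, while the remaining three terms are each proportional to $\mathbf{u}\mathbf{v}^{\top}\mathbf{A}^{-1}$. The key simplification is that $\mathbf{v}^{\top}\mathbf{A}^{-1}\mathbf{u}$ is a \emph{scalar} equal to $\beta-1$, so the quadratic cross term $\mathbf{u}\mathbf{v}^{\top}\mathbf{A}^{-1}\mathbf{u}\mathbf{v}^{\top}\mathbf{A}^{-1}$ collapses to $(\beta-1)\,\mathbf{u}\mathbf{v}^{\top}\mathbf{A}^{-1}$. Collecting coefficients, the multiplier of $\mathbf{u}\mathbf{v}^{\top}\mathbf{A}^{-1}$ is $-\tfrac{1}{\beta} + 1 - \tfrac{\beta-1}{\beta} = \tfrac{-1+\beta-(\beta-1)}{\beta} = 0$, so these terms cancel exactly and the product reduces to $\mathbf{I}$.

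There is no deep obstacle here; the proof is entirely mechanical, and the only discipline required is the \emph{main} bookkeeping point just used, namely treating $\mathbf{v}^{\top}\mathbf{A}^{-1}\mathbf{u}$ as a scalar that may be pulled out of the surrounding matrix products and then confirming that the resulting scalar coefficient vanishes identically. Having shown $(\mathbf{A}+\mathbf{u}\mathbf{v}^{\top})\mathbf{M}=\mathbf{I}$, I would invoke the right-inverse argument for square matrices to conclude $\mathbf{M} = (\mathbf{A}+\mathbf{u}\mathbf{v}^{\top})^{-1}$, which is precisely the claimed identity.
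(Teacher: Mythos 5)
Your proof is correct and complete. There is, however, nothing in the paper to compare it against: the paper states this lemma verbatim as the classical Sherman--Morrison formula and simply cites the literature for it, offering no proof of its own. Your direct verification is the standard textbook argument --- expand $(\mathbf{A}+\mathbf{u}\mathbf{v}^{\top})\mathbf{M}$, use that $\mathbf{v}^{\top}\mathbf{A}^{-1}\mathbf{u}=\beta-1$ is a scalar so the quadratic cross term collapses to $(\beta-1)\,\mathbf{u}\mathbf{v}^{\top}\mathbf{A}^{-1}$, and check that the coefficient $-\tfrac{1}{\beta}+1-\tfrac{\beta-1}{\beta}$ vanishes --- and your closing appeal to the fact that a right inverse of an invertible square matrix is its unique two-sided inverse is sound. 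You also attend to the one point such statements usually gloss over, namely that $\beta\neq 0$ so the candidate is well defined; your derivation of this from the matrix determinant lemma and the invertibility of both $\mathbf{A}$ and $\mathbf{A}+\mathbf{u}\mathbf{v}^{\top}$ is valid (an even more self-contained alternative: if $\beta=0$ and $\mathbf{u}\neq\mathbf{0}$, then $(\mathbf{A}+\mathbf{u}\mathbf{v}^{\top})\mathbf{A}^{-1}\mathbf{u}=\beta\mathbf{u}=\mathbf{0}$ with $\mathbf{A}^{-1}\mathbf{u}\neq\mathbf{0}$, contradicting invertibility, while $\mathbf{u}=\mathbf{0}$ forces $\beta=1$). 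In short, your proposal supplies a correct proof that the paper delegates to a citation.
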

		According to our definition, we have
		\begin{align}
		g(\mathcal{S}) &= f \left( Z(\mathcal{S})^{-1}  \right), \nonumber\\
		g(\mathcal{S}\cup\{j\}) & = f\left(Z(\mathcal{S}\cup\{j\})^{-1}\right)\nonumber\\
		&= f \left( \left(Z(\mathcal{S}) + \mathbf{v}_{j:}^\top \mathbf{v}_{j:} \right)^{-1}\right) \nonumber\\
		& = f\left(   Z(\mathcal{S})^{-1} - \frac{Z(\mathcal{S})^{-1} \mathbf{v}_{j:}^\top \mathbf{v}_{j:} Z(\mathcal{S})^{-1} }{1 + \mathbf{v}_{j:} Z(\mathcal{S})^{-1}\mathbf{v}_{j:}^\top} \right)\nonumber\\
		\end{align} 
		Define
		\begin{equation}
			C(\mathcal{S},j) = \frac{Z(\mathcal{S})^{-1} \mathbf{v}_{j:}^\top \mathbf{v}_{j:} Z(\mathcal{S})^{-1} }{1 + \mathbf{v}_{j:} Z(\mathcal{S})^{-1}\mathbf{v}_{j:}^\top}.
		\end{equation}
		We now prove that all diagonal elements of $C(\mathcal{S},j)$ is non-negative. It is obvious that for any $\mathcal{S}, Z(\mathcal{S})$ is positive definite and the eigenvalues are in \([\mu,1+\mu]\). Thus, $Z(\mathcal{S})^{-1}$ is also positive definite and we have 
		\begin{equation}\label{49}
		0 < 1+\frac{1}{1+\mu} \leq 1 + \mathbf{v}_{j:} Z(\mathcal{S})^{-1}\mathbf{v}_{j:}^\top \leq 1+\frac{1}{\mu}.
		\end{equation}
		Also, the $k$-th diagonal element of \(Z(\mathcal{S})^{-1} \mathbf{v}_{j:}^\top \mathbf{v}_{j:} Z(\mathcal{S})^{-1}\) satisfies
			\begin{align}		\label{50}
				&\mathbf{e}_k^\top Z(\mathcal{S})^{-1} \mathbf{v}_{j:}^\top \mathbf{v}_{j:} Z(\mathcal{S})^{-1} \mathbf{e}_k \nonumber\\
				=&~ \left(\mathbf{e}_k Z(\mathcal{S})^{-1}\right)^\top \mathbf{v}_{j:} ^\top\mathbf{v}_{j:} \left(Z(\mathcal{S})^{-1} \mathbf{e}_k\right) 
				 \geq 0,
			\end{align}
		where \(\mathbf{e}_k\) denotes the $n$-dimensional unit column vector with the $k$-th entry equals 1 and the last inequality holds because $\mathbf{v}_{j:}^\top \mathbf{v}_{j:}$ is semi-positive definite. Combining~\eqref{49} and~\eqref{50}, all diagonal elements of matrix $ C(\mathcal{S},j)$ is no smaller than $0$. So according to Lemma~\ref{beforetheo},
		
		\begin{align}
			&f \left( Z(\mathcal{S})^{-1}  \right) -f\left(   Z(\mathcal{S})^{-1} - \frac{Z(\mathcal{S})^{-1} \mathbf{v}_{j:}^\top \mathbf{v}_{j:} Z(\mathcal{S})^{-1} }{1 + \mathbf{v}_{j:} Z(\mathcal{S})^{-1}\mathbf{v}_{j:}^\top} \right) \nonumber\\
			\geq&~ - f\left(   - \frac{Z(\mathcal{S})^{-1} \mathbf{v}_{j:}^\top \mathbf{v}_{j:} Z(\mathcal{S})^{-1} }{1 + \mathbf{v}_{j:} Z(\mathcal{S})^{-1}\mathbf{v}_{j:}^\top} \right) \nonumber\\
			=&~ f\left( C(\mathcal{S},j)\right)  \geq 0,
		\end{align}
		which means
		\begin{equation}
		g(\mathcal{S}\cup\{j\}) - g(\mathcal{S}) \leq 0.
		\end{equation}
		
		This shows that set function $g(\mathcal{S}) = f \left( Z(\mathcal{S})^{-1}  \right)$ is a monotone decreasing function.

		\vskip 0.5em

		\item[ii.] \emph{\(\alpha\)-supermodular}\\ 
		We now prove that our proposed function is \(\alpha\)-supermodular and derive the lower bound of $\alpha$. Recall
		\begin{align}\label{62}
		\alpha &=\min _{\mathcal{A} \subseteq \mathcal{B} \subseteq \mathcal{V}, j \notin \mathcal{B}} \frac{g(\mathcal{A} \cup\{j\})-g(\mathcal{A})}{g(\mathcal{B} \cup\{j\})-g(\mathcal{B})}\nonumber\\
		& = \min _{\mathcal{A} \subseteq \mathcal{B} \subseteq \mathcal{V}, j \notin \mathcal{B}} \frac{f\left(Z(\mathcal{A}\cup\{j\})^{-1}\right) - f\left(Z(\mathcal{A})^{-1}\right)}{f\left(Z(\mathcal{B}\cup\{j\})^{-1}\right) - f\left(Z(\mathcal{B})^{-1}\right)}. 
		\end{align}
		According to Lemma~\ref{beforetheo}, we have
		\begin{align}
		\alpha	& \geq \frac{-f\left( Z(\mathcal{A})^{-1} - Z(\mathcal{A}\cup\{j\})^{-1}\right) }{f(Z\left(\mathcal{B}\cup\{j\})^{-1} - Z(\mathcal{B})^{-1}\right	)}\nonumber \\
		& = \frac{-f(C(\mathcal{A},j))}{-f(C(\mathcal{B},j))} \nonumber\\
		& = -\frac{1 + \mathbf{v}_{j:} Z(\mathcal{B})^{-1}\mathbf{v}_{j:}^\top}{1 + \mathbf{v}_{j:} Z(\mathcal{A})^{-1}\mathbf{v}_{j:}^\top} \frac{f \left(Z(\mathcal{A})^{-1} \mathbf{v}_{j:}^\top \mathbf{v}_{j:} Z(\mathcal{A}\right)^{-1})}{f \left(-Z(\mathcal{B})^{-1} \mathbf{v}_{j:} ^\top\mathbf{v}_{j:} Z(\mathcal{B})^{-1}\right)}.
		\end{align}
		
		For simplicity, let 
			\begin{align}
				t_{\max}(\mathcal{S},j) &= f \left(Z(\mathcal{S})^{-1} \mathbf{v}_{j:} ^\top\mathbf{v}_{j:} Z(\mathcal{S}\right)^{-1})\nonumber\\
				&= \max \operatorname{diag}(Z(\mathcal{S})^{-1}\mathbf{v}_{j:} ^\top\mathbf{v}_{j:} Z(\mathcal{S})^{-1}),\nonumber\\
				t_{\min}(\mathcal{S},j) &= \min \operatorname{diag}(Z(\mathcal{S})^{-1} \mathbf{v}_{j:} ^\top\mathbf{v}_{j:} Z(\mathcal{S})^{-1}).
			\end{align}
		Here $\min\operatorname{diag}$ means the minimum element of the diagonal elements of the matrix. It is obvious
		\begin{equation}
			t_{\max}(\mathcal{S},j) \geq t_{\min}(\mathcal{S},j).
		\end{equation}
		Now we arrive at 
		\begin{equation}
			\alpha \geq -\frac{1 + \mathbf{v}_{j:} Z(\mathcal{B})^{-1}\mathbf{v}_{j:}^\top}{1 + \mathbf{v}_{j:} Z(\mathcal{A})^{-1}\mathbf{v}_{j:}^\top}    \frac{t_{\max}(\mathcal{A},j)}{t_{\min}(\mathcal{B},j)}.
		\end{equation}
		
		We first prove that $t_{\max}(\mathcal{S},j)$ is a monotone decreasing function for $\mathcal{S}$. Recall that 
		\begin{align}
		&t_{\max}(\mathcal{S}\cup\{j\})\nonumber \\
		=&~ f \left(Z(\mathcal{S}\cup\{j\})^{-1} \mathbf{v}_{j:} ^\top\mathbf{v}_{j:}Z(\mathcal{S}\cup\{j\})^{-1}\right)\nonumber\\
		=&~ f \left((Z(\mathcal{S})+ \mathbf{v}_{j:} ^\top\mathbf{v}_{j:})^{-1} \mathbf{v}_{j:} ^\top\mathbf{v}_{j:}(Z(\mathcal{S}) + \mathbf{v}_{j:} ^\top\mathbf{v}_{j:})^{-1}\right)\nonumber\\
		=&~t_{\max}(\mathcal{S}) - f(h(\mathcal{S},j)),
		\end{align}
		where
		\begin{align}
		h(\mathcal{S},j) &\triangleq \frac{2\mathbf{v}_{j:} \left(Z(\mathcal{S})^{-1}\mathbf{v}_{j:}^\top\right)^2 \mathbf{v}_{j:} Z(\mathcal{S})^{-1}}{1 + \mathbf{v}_{j:} Z(\mathcal{S})^{-1}\mathbf{v}_{j:}^\top} \nonumber\\
		&-\frac{\left(\mathbf{v}_{j:} Z(\mathcal{S})^{-1}\mathbf{v}_{j:}^\top\right)^2Z(\mathcal{S})^{-1} \mathbf{v}_{j:} ^\top\mathbf{v}_{j:} Z(\mathcal{S})^{-1}}{1+\mathbf{v}_{j:} Z(\mathcal{S})^{-1}\mathbf{v}_{j:}^\top}.
		\end{align}
		
		We only need to focus on the symbol of the diagonal elements of $h(\mathcal{S},j)$. Notice that
		\begin{align}
		&h(\mathcal{S},j)\nonumber \\
		=&~ \left(2 - \frac{\mathbf{v}_{j:} Z(\mathcal{S})^{-1}\mathbf{v}_{j:}^\top}{1+\mathbf{v}_{j:} Z(\mathcal{S})^{-1}\mathbf{v}_{j:}^\top} \right)Z(\mathcal{S})^{-1}\mathbf{v}_{j:} ^\top\mathbf{v}_{j:}Z(\mathcal{S})^{-1}.
		\end{align}
		
		As shown in~\eqref{49} and~\eqref{50}, all diagonal elements of $Z(\mathcal{S})^{-1}\mathbf{v}_{j:} ^\top\mathbf{v}_{j:}Z(\mathcal{S})^{-1}$ are not smaller than $0$, and
		\begin{equation}
			2 - \frac{C(\mathcal{S},j)}{1+C(\mathcal{S},j)} > 1.
		\end{equation}
		This indicates that the diagonal elements of $h(\mathcal{S},j)$ are all no smaller than $0$. Thus
		\begin{align}
			t_{\max}(\mathcal{S}\cup\{j\})\nonumber &= t_{\max}(\mathcal{S}) - f(h(\mathcal{S},j))\\
			&\leq t_{\max}(\mathcal{S}).
		\end{align}
		which means $t_{\max}(\mathcal{S},j)$ is a monotone decreasing function. Because  $\mathcal{A}\subseteq \mathcal{B}$, 
		\begin{align}
		\alpha &\geq -\frac{1 + \mathbf{v}_{j:} Z(\mathcal{B})^{-1}\mathbf{v}_{j:}^\top}{1 + \mathbf{v}_{j:} Z(\mathcal{A})^{-1}\mathbf{v}_{j:}^\top}    \frac{t_{\max}(\mathcal{A},j)}{t_{\min}(\mathcal{B},j)}\nonumber \\
		&\geq -\frac{1 + \mathbf{v}_{j:} Z(\mathcal{B})^{-1}\mathbf{v}_{j:}^\top}{1 + \mathbf{v}_{j:} Z(\mathcal{A})^{-1}\mathbf{v}_{j:}^\top}\frac{t_{\max}(\mathcal{B},j)}{t_{\min}(\mathcal{B},j)}\nonumber\\
		&\geq -\frac{1 + \mathbf{v}_{j:} Z(\mathcal{B})^{-1}\mathbf{v}_{j:}^\top}{1 + \mathbf{v}_{j:} Z(\mathcal{A})^{-1}\mathbf{v}_{j:}^\top}\nonumber \\
		&\geq \frac{\mu(2+\mu)}{(1+\mu)^2}.
		\end{align}

	\end{enumerate}
\end{proof}

\section{Proof of Proposition~\ref{proposition2}}\label{pro2}
We use the following lemma~\cite{horn2012matrix}:
\begin{lemma}\label{lemma5}
	Suppose $\mathbf{C}$ is a $m\times n$ matrix, $m \geq n$, then the $n$ eigenvalues of $\mathbf{C}^\top \mathbf{C}$ is also the eigenvalues of $\mathbf{C}\mathbf{C}^\top$. Furthermore, the other $n-m$ eigenvalues of $\mathbf{C}\mathbf{C}^\top$ is $0$.
\end{lemma}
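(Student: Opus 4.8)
The plan is to prove this classical spectral identity by transferring eigenvectors through $\mathbf{C}$ and then pinning down the multiplicities and the surplus zeros with a single determinant identity. Both $\mathbf{C}^\top\mathbf{C}\in\mathbb{R}^{n\times n}$ and $\mathbf{C}\mathbf{C}^\top\in\mathbb{R}^{m\times m}$ are symmetric positive semidefinite, so every eigenvalue is real and nonnegative and each matrix is orthogonally diagonalizable; this lets me reason purely in terms of eigenvalues counted with multiplicity.

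First I would handle the nonzero spectrum. If $\lambda>0$ satisfies $\mathbf{C}^\top\mathbf{C}\mathbf{u}=\lambda\mathbf{u}$ with $\mathbf{u}\neq\mathbf{0}$, then left-multiplying by $\mathbf{C}$ yields $\mathbf{C}\mathbf{C}^\top(\mathbf{C}\mathbf{u})=\lambda(\mathbf{C}\mathbf{u})$, and because $\|\mathbf{C}\mathbf{u}\|^2=\mathbf{u}^\top\mathbf{C}^\top\mathbf{C}\mathbf{u}=\lambda\|\mathbf{u}\|^2>0$ the image $\mathbf{C}\mathbf{u}$ is nonzero, so $\lambda$ is also an eigenvalue of $\mathbf{C}\mathbf{C}^\top$. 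The same argument with $\mathbf{C}$ and $\mathbf{C}^\top$ interchanged gives the reverse inclusion, so the two matrices have exactly the same nonzero eigenvalues.

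To obtain the count of zero eigenvalues and to confirm that the shared multiplicities agree, I would use the Schur-complement (Sylvester) determinant identity
\begin{equation}
\det\!\left(\lambda\mathbf{I}_m-\mathbf{C}\mathbf{C}^\top\right)=\lambda^{\,m-n}\,\det\!\left(\lambda\mathbf{I}_n-\mathbf{C}^\top\mathbf{C}\right),
\end{equation}
which I would derive by computing the determinant of $\begin{bmatrix}\lambda\mathbf{I}_m & \mathbf{C}\\ \mathbf{C}^\top & \mathbf{I}_n\end{bmatrix}$ in two ways, eliminating first the $\mathbf{I}_n$ block and then the $\lambda\mathbf{I}_m$ block. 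This shows the characteristic polynomial of the $m\times m$ matrix $\mathbf{C}\mathbf{C}^\top$ equals that of the $n\times n$ matrix $\mathbf{C}^\top\mathbf{C}$ multiplied by $\lambda^{m-n}$; hence every eigenvalue of $\mathbf{C}^\top\mathbf{C}$ (with its multiplicity) appears in $\mathbf{C}\mathbf{C}^\top$, and the remaining eigenvalues of $\mathbf{C}\mathbf{C}^\top$ equal $0$.

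This is a standard fact, so no step is genuinely hard; the only care needed is (i) verifying $\mathbf{C}\mathbf{u}\neq\mathbf{0}$ so the eigenvector actually survives the transfer, and (ii) the bookkeeping of the surplus zeros — since $\mathbf{C}\mathbf{C}^\top$ is $m\times m$ with $m\geq n$, that surplus is $m-n$ (the ``$n-m$'' in the statement should read $m-n$). An equivalent and perhaps cleaner route is the SVD $\mathbf{C}=\mathbf{U}\boldsymbol{\Sigma}\mathbf{W}^\top$, from which $\mathbf{C}^\top\mathbf{C}=\mathbf{W}\boldsymbol{\Sigma}^\top\boldsymbol{\Sigma}\mathbf{W}^\top$ and $\mathbf{C}\mathbf{C}^\top=\mathbf{U}\boldsymbol{\Sigma}\boldsymbol{\Sigma}^\top\mathbf{U}^\top$ exhibit both spectra as the squared singular values, the latter padded with $m-n$ zeros.
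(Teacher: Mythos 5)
Your proposal is correct, and it supplies something the paper does not: the paper states this lemma in its appendix and simply cites Horn and Johnson's \emph{Matrix Analysis} for it, giving no proof of its own. Your argument is sound on both prongs — the eigenvector-transfer step correctly verifies that the eigenvector survives the transfer via $\|\mathbf{C}\mathbf{u}\|^2=\mathbf{u}^\top\mathbf{C}^\top\mathbf{C}\mathbf{u}=\lambda\|\mathbf{u}\|^2>0$, and the Schur-complement identity $\det\left(\lambda\mathbf{I}_m-\mathbf{C}\mathbf{C}^\top\right)=\lambda^{m-n}\det\left(\lambda\mathbf{I}_n-\mathbf{C}^\top\mathbf{C}\right)$ settles the multiplicity bookkeeping and the count of surplus zeros in one stroke. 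The only point worth stating explicitly is that the second block elimination inverts $\lambda\mathbf{I}_m$ and so presumes $\lambda\neq 0$; since both sides are polynomials in $\lambda$, equality on $\lambda\neq 0$ extends to all $\lambda$. The SVD route you sketch at the end is equally valid and is essentially the textbook argument the paper implicitly leans on by citation. You are also right that the statement's ``$n-m$'' is a typo for ``$m-n$'': $\mathbf{C}\mathbf{C}^\top$ is $m\times m$, so it carries exactly $m-n$ extra zero eigenvalues beyond the spectrum of $\mathbf{C}^\top\mathbf{C}$.
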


\begin{proof}[Proof of Proposition~\ref{proposition2}]
	Suppose the diagonal elements of $\left( {\mathbf{V}}_{\mathcal{S} K}{\mathbf{V}}_{\mathcal{S} K}^{\top} + \mu \mathbf{I}\right)^{-1}$ is $\{d_{i}\}_{i=1,2,\cdots,M}$. Also suppose the eigenvalues of $\left( {\mathbf{V}}_{\mathcal{S} K}^{\top}{\mathbf{V}}_{\mathcal{S} K} + \mu \mathbf{I}\right)^{-1}$ are $\{\lambda_i\}_{i =1,2,\cdots,K}$. According to Lemma~\ref{lemma5}, the $M$ eigenvalues of $\left( {\mathbf{V}}_{\mathcal{S} K}{\mathbf{V}}_{\mathcal{S} K}^{\top} + \mu \mathbf{I}\right)^{-1}$ are \\$\{\lambda_i, 1/\mu,\cdots,1/\mu\}_{i=1,\cdots,K}$. According to Lemma~\ref{lemma1}, we have
	\begin{equation}
		\underset{i}{\max}~d_{ii} \geq \left(\prod_{i=1}^Md_{ii}\right)^{\frac{1}{M}} \geq \left(\prod_{i=1}^K\lambda_i\right)^{\frac{1}{K}}\left(\frac{1}{\mu}\right)^{\frac{1}{M-K}}.
	\end{equation}
\end{proof}


%

\end{document}